\documentclass[11pt,letterpaper]{article}
\usepackage[margin=1in]{geometry}
\usepackage{comment}
\usepackage[backref=page]{hyperref}
\hypersetup{
  colorlinks=true,
  citecolor=blue
}
\usepackage{graphicx}
\usepackage{mathtools}
\usepackage{amsmath}
\usepackage{amsthm}
\usepackage{amssymb}
\usepackage[ruled,vlined]{algorithm2e}
\usepackage{xcolor}
\newcommand{\cmt}[1]{}

\newtheorem{theorem}{Theorem}[section]

\newtheorem{proposition}[theorem]{Proposition}
\theoremstyle{remark}
\newtheorem*{remark}{Remark}

\title{Improved Approximation Ratios of Fixed-Price Mechanisms in Bilateral Trades}
\usepackage{authblk}
\author[1]{Zhengyang Liu\thanks{zhengyang@bit.edu.cn}}
\author[2]{Zeyu Ren\thanks{zeyuren@ruc.edu.cn}}
\author[2]{Zihe Wang\thanks{wang.zihe@ruc.edu.cn}}
\affil[1]{Beijing Institute of Technology}
\affil[2]{Renmin University of China}
\date{}

\begin{document}
\maketitle
\thispagestyle{empty}
\begin{abstract}
We continue the study of the performance for fixed-price mechanisms in the bilateral trade problem, and improve approximation ratios of welfare-optimal mechanisms in several settings. Specifically, in the case where only the buyer distribution is known, we prove that there exists a distribution over different fixed-price mechanisms, such that the approximation ratio lies within the interval of $[0.71, 0.7381]$. Furthermore, we show that the same approximation ratio holds for the optimal fixed-price mechanism, when both buyer and seller distributions are known. As a result, the previously best-known $(1 - 1/e+0.0001)$-approximation can be improved to $0.71$. Additionally, we examine randomized fixed-price mechanisms when we receive just one single sample from the seller distribution, for both symmetric and asymmetric settings. Our findings reveal that posting the single sample as the price remains optimal among all randomized fixed-price mechanisms.
\end{abstract}

\section{Introduction}
\label{sec:introduction}

In this paper, we study the bilateral trade problem in a basic
model: one buyer and one seller want to trade a single and indivisible
good. Both of them have their own private valuations to the good, drawn from  public distributions respectively. The mechanism designer can access both of their
value distributions, but not their particular values. Depending on whether value
distributions of these two agents are identical, we have the {\em symmetric}
setting and the {\em asymmetric} (or general) setting. The designer aims to maximize the {\em social welfare}, i.e., the value of the one
holding the good after the trade (if any).

Formally, we denote by $B$ the value of the buyer, and $S$ the value of the seller. The ideal solution is the so-called ``first-best'' mechanism, which is the case that
the trade happens once $B>S$ and the corresponding social welfare would be $\max \left\{ B,S \right\}$. The celebrated result due to~\cite{Myerson:1983uq} states that no
individually rational (IR) and Bayesian incentive-compatible (BIC) mechanism
with the budget-balanced (BB) constraint can achieve the first-best optimum. In the
same paper, Myerson and Satterthwaite also present a ``second-best'' mechanism
attaining the optimal among all the IR, BIC and BB ones. However, a drawback of the second-best mechanism is that the characterization is
complicated to describe and thus not suitable for practical implementation. Hence we need to pursue simple and almost-optimal mechanisms. 

Fixed-price mechanisms are the best choice in the literature~\cite{Hagerty:1987vj,mcafee2008gains,colini2017fixed,brustle2017approximating}.  In fixed-price mechanisms, the mechanism designer
picks a price $p$ (depends on the distributions of two agents, possibly) and gives it to both
of the buyer and the seller. In the {\em randomized} case, one can pick the price drawn from some distribution. If the price is between the values of two agents (i.e., $S\le p\le B$),
the trade happens. It is easy to see that fixed-price mechanisms are
dominant-strategy incentive-compatible (DSIC). A recent research line tries to prove
the near-optimal guarantees of such mechanisms. Our paper improves approximation ratios in variant settings for this problem.

\subsection{Related Work}
\label{sec:related-work}
Forty years ago, Myerson and Satterthwaite~\cite{Myerson:1983uq} initiated the
study of the optimal incentive-compatible mechanisms in bilateral trades. In
\cite{Hagerty:1987vj}, Hagerty and Rogerson showed that fixed-price mechanisms are
the unique choice of strong budget balanced and dominant-strategy truthful
mechanisms in bilateral trades, essentially. By relaxing the budget balance
condition to a no-deficit condition, fixed-price mechanisms can be optimal due
to~\cite{Drexl:2015tn,Shao:2016vw}.

Recently, a research line pursued near-optimal guarantees of fixed-price
mechanisms in bilateral trades w.r.t. the welfare. Blumrosen and Dobzinski~\cite{Blumrosen:2014tr} considered the Median Mechanism which sets a price equals to the median of the distribution of the seller and showed that the approximation ratio is $0.5$. Blumrosen and Dobzinski~\cite{Blumrosen:2021uz} also showed that given the distribution of either the seller or the buyer, there is a price distribution such that the randomized fixed-price mechanism in expectation achieves at least $1-1/e\approx 0.63$ of the optimal efficiency. It implies that for every pair of the seller and the buyer distributions, there exists a deterministic fixed-price mechanism that achieves at least $1-1/e\approx 0.63$ of the optimal efficiency. Recently, Kang, Pernice and Vondr\'{a}k~\cite{Kang:2022wo} showed that in the general setting, the approximation ratio can be greater than $1-1/e+0.0001$, beating the previous work. Their method is choosing the better one between two mechanisms depending on whether the asymmetry between two agents' distributions is severe or not. In the same paper, they also provided the tight ratio $(2+\sqrt{2})/4$ for the symmetric setting. On the other side, \cite{Colini-Baldeschi:2016ut} gave an example of the seller and the buyer distributions where no fixed-price mechanism can achieve $0.7485$ of the optimal efficiency. This impossibility result is improved to $0.7385$ by~\cite{Kang2018StrategyProofAO}. 

Another setting captures the case where we can only obtain the minimum-possible amount
of the prior information from the distributions. For the general setting, \cite{Dutting:2021wx} provided a $1/2$-approximation analysis
by using the single sample from the seller distribution as the price. For the symmetric setting, Kang, Pernice and
Vondr\'{a}k~\cite{Kang:2022wo} showed that a $3/4$-approximation to the welfare
can be obtained.

Another notion of characterizing the performance of mechanisms in bilateral
trades is {\em gain from trade} (GFT), i.e., the expected social utility
attained from the trading. Very recently, \cite{Deng:2022uy} and \cite{Fei:2022uh} proved that a combination of simple mechanisms (RandOff) achieves the constant
approximation of the first-best, and the state-of-the-art ratio is $1/3.15$. Obtaining the exact ratio is a well-known open problem.
On the hardness side, \cite{Babaioff:2021wj} and \cite{Cai:2021tp} independently showed that there exist instances that RandOff cannot achieve half to the optimal GFT.
Besides, \cite{Leininger:1989tq} and \cite{Blumrosen:2016uv} proved that no BIC, IR mechanisms can obtain better approximation than $2/e$ to the optimal GFT.
\paragraph{Independent Work.}
Very recently, Cai and Wu~\cite{cai2023optimal} found that the infinite dimensional quadratically constrained quadratic program (QCQP) can characterize the worst-case instance and the approximation ratio is determined by such an instance. They used a finite program to get a lower bound of $0.72$ and a upper bound of $0.7381$ numerically, respectively. It is worth noting that solving the general case of QCQP is an NP-hard problem. Compared with their method, our algorithm is FPTAS. Thus, to approach the tight bound numerically, our method may be more efficient.
\subsection{Our Results}
\label{sec:our-results}
In this paper, we reduce gaps mentioned in~\cite{Kang:2022wo} as shown in Table~\ref{tab:summary}:
\begin{enumerate}
\item For any distribution of the buyer, there is a distribution of prices such that the randomized fixed-price mechanism in expectation achieves $0.71$ approximation of the optimal efficiency. The main tools we use are interesting characterizations of this price distribution (See Section~\ref{sec:proper}). Moreover, we characterize the extreme distribution of the buyer where the tight bound can be achieved. We also provide an example where no randomized fixed-price mechanism can achieve $0.7381$ approximation. We show that the same approximation ratio holds for the asymmetric setting. Thus, a direct corollary is that for every pair of buyer and seller distributions, there is a fixed-price mechanism that achieves $0.71$ approximation of the optimal efficiency, improving the previous $(1 - 1/e+0.0001)$-approximation.

\item For the general bilateral trade using a single sample from the seller distribution as the sole information, we show that even for the randomized fixed-price mechanisms, it is impossible to achieve $1/2$-approximation. Together with the previous work~\cite{Dutting:2021wx}, the approximation ratio $1/2$ is tight in this setting.
In addition, we show that any randomized fixed-price mechanism using a sample as information cannot attain an approximation better than $3/4$.
in the symmetric setting. Together with the previous work~\cite{Kang:2022wo}, the ratio $3/4$ is also tight.
\end{enumerate}
\begin{table}[!htbp]\centering
\vspace{-0.5cm}
\begin{tabular}{|c|c|}\hline
  {\bf Variant}& {\bf Approximation} \\ \hline
  Asymmetric, full knowledge & $\textcolor{magenta}{[0.71,0.7381]}$  \\ \hline
  Asymmetric, buyer's knowledge & $\textcolor{magenta}{[0.71,0.7381]}$   \\ \hline
  Symmetric, full knowledge & $(2+\sqrt{2})/4^{*}$\cite{Kang:2022wo} \\ \hline
  Asymmetric, 1 prior sample from seller & $\textcolor{magenta}{1/2}^{*}$ (also see~\cite{Dutting:2021wx})  \\ \hline
  Symmetric, 1 prior sample  & $\textcolor{magenta}{3/4}^{*}$ (also see~\cite{Kang:2022wo})  \\ \hline
\end{tabular}
\caption{A Survey of fixed-price mechanism approximations in bilateral trades
  w.r.t welfare maximization. Following~\cite{Kang:2022wo}, we also highlight our
  results in \textcolor{magenta}{magenta}, and mark a $*$ for all the tight results. For cases with 1 prior sample, we get tight results by studying randomized mechanisms.}
\label{tab:summary}
\vspace{-0.5cm}
\end{table}
\section{Preliminaries}
\label{sec:preliminaries}
In the bilateral trade, a buyer and a seller trade an indivisible good. They
have independent values to the good, $b$ for the buyer and $s$ for the
seller. The specific value are unknown to the mechanism designer. We assume that
their values are drawn from public distributions $F_B$ and $F_S$, i.e., $b\sim F_B$ and $s\sim
F_S$. The expected gain from trade achieved by using fixed price $p$ can be represented by
\begin{align*}
   &\mathbb{E}_{B,S}[(b-s)\cdot\mathbf{1}_{b\geq p \geq s}] \\
    =&\mathbb{E}_{B,S}[(b-p+p-s)\cdot\mathbf{1}_{b\geq p \geq s}]\\
    =&\Pr(s\le p)\mathbb{E}_{B}[(b-p)\cdot\mathbf{1}_{b\geq p}]+\Pr(b\ge p)\mathbb{E}_S[(p-s)\cdot\mathbf{1}_{p\ge s}]\\
    \ge&F_S(p)\mathbb{E}_{B}[(b-p)\cdot\mathbf{1}_{b\geq p}]+ (1-F_B(p))\mathbb{E}_S[(p-s)\cdot\mathbf{1}_{p\ge s}] \\
    =&F_S(p)\int_p^{\infty}\left(1-F_B(b)\right)\mathrm{d}b+(1-F_B(p))\mathbb{E}_S[(p-s)\cdot\mathbf{1}_{p\ge s}]\\
    =&\mathbb{E}_S\left[\left(\int_p^{\infty}(1-F_B(b))\mathrm{d}b+(1-F_B(p))(p-s)\right)\cdot\mathbf{1}_{p\ge s}\right].
\end{align*}

The inequality is due to that we ignore the case where the buyer's value is exactly the price. When the buyer distribution is continuous, the inequality becomes  an equality. The welfare can be separated into the sum of the seller's value of the good and the gain from trade. Thus, We define the expected welfare $\textrm{W}(p;F_S,F_B):=\mathbb{E}_{S}[s]+\mathbb{E}_{B,S}[(b-s)\cdot\mathbf{1}_{b\geq p \geq s}]$.  We also let $W_1(p;F_S,F_B)$ denote the expression:
$$\mathbb{E}_S\left[s+\left(\int_p^{\infty}(1-F_B(b))\mathrm{d}b+(1-F_B(p))(p-s)\right)\cdot\mathbf{1}_{p\ge s}\right].$$

We denote by $\textrm{OPT-W}(F_S,F_B)$ the first-best optimal welfare, that is $\max\{B, S\}$. We can also rewrite it as:
\begin{align*}
    \textrm{OPT-W}(F_S,F_B)&=\mathbb{E}_S[s]+\mathbb{E}_{B,S}[(b-s)\cdot\mathbf{1}_{b\geq s}] \\
    &=\mathbb{E}_S[s]+\mathbb{E}_S[\mathbb{E}_B[(b-s)\cdot \mathbf{1}_{b\ge s} ]]\\
    &=\mathbb{E}_S\left[s+\int^{\infty}_s (1-F_B(b))\mathrm{d}b\right].
\end{align*}

If the distributions $F_S$ and $F_B$ are
identical, we use $F$ instead and rewrite $\textrm{W}(p;F_S,F_B)$ and $\textrm{OPT-W}(F_S,F_B)$ as $\textrm{W}(p;F)$ and
$\textrm{OPT-W}(F)$, respectively.

In this paper, we consider the welfare approximation ratios of the optimal fixed-price mechanism,
\begin{equation*}
    \inf_{F_B,F_S\in\Delta_{L^1}(\mathbb{R}_{+})}\sup_{p\in\mathbb{R}_{+}}\frac{\mathrm{W}(p;F_S,F_B)}{\mathrm{OPT}\text{-}\mathrm{W}(F_S,F_B)},
\end{equation*}
where $\Delta_{L^1}(\mathbb{R}_+)$ denotes the set of all the probability distributions over non-negative real numbers. Both $E_B[b]$ and $E_S[s]$ should be greater than 0.
We are interested in the case where $\mathrm{OPT}\text{-}\mathrm{W}(F_S,F_B)$ is finite and it requires both $E_B[b]$ and $E_S[s]$ are finite.
For any function $F_B$, we can always construct a continuous function $\tilde{F}_B$ such that $\mathrm{OPT}\text{-}\mathrm{W}(F_S,\tilde{F}_B)$ can be arbitrarily close to $\mathrm{OPT}\text{-}\mathrm{W}(F_S,F_B)$ and
$$\sup_{p\in \mathbb{R}_{+}}\mathrm{W}(p,F_S,F_B)=\sup_{p\in \mathbb{R}_{+}}\mathrm{W}(p,F_S,\tilde{F}_B)=\sup_{p\in \mathbb{R}_{+}}\mathrm{W}_1(p,F_S,\tilde{F}_B).$$
 Therefore, we have
\begin{equation*}
    \inf_{F_B,F_S\in\Delta_{L^1}(\mathbb{R}_{+})}\sup_{p\in\mathbb{R}_{+}}\frac{\mathrm{W}(p;F_S,F_B)}{\mathrm{OPT}\text{-}\mathrm{W}(F_S,F_B)}
    =\inf_{F_B,F_S\in\Delta_{L^1}(\mathbb{R}_{+})}\sup_{p\in\mathbb{R}_{+}}\frac{\mathrm{W}_1(p;F_S,F_B)}{\mathrm{OPT}\text{-}\mathrm{W}(F_S,F_B)}.
\end{equation*}

\section{Welfare Approximation in the General Case}
In this section, we consider the general case. We first introduce the optimization problem to facilitate determining the approximation ratios. Then we will show some useful properties of our problem in Section~\ref{sec:proper}. Using these properties, we characterize the extreme case where the tight bound can be achieved in Section~\ref{sec:bound}. Finally we propose an algorithm to approximate the optimal solution and give numerical bound results in Section~\ref{sec:numerical}.

We note that
$$\sup_{p\in\mathbb{R}_+}\frac{\textrm{W}_1(p;F_S,F_B)}{\textrm{OPT-W}(F_S,F_B)}=\sup_{F_P\in\Delta_{L^1}(\mathbb{R}_+)}\frac{\mathbb{E}_p[\textrm{W}_1(p;F_S,F_B)]}{\textrm{OPT-W}(F_S,F_B)}.$$
Furthermore, given a number $\beta$, proving that $\beta$ is a lower bound of the approximation ratio is equivalent to proving that the following formula is always non-negative:
\begin{align*}
    \inf_{F_B,F_S \in\Delta_{L^1}(\mathbb{R}_+)}\sup_{F_P\in\Delta_{L^1}(\mathbb{R}_+)}\left\{\mathbb{E}_p[\textrm{W}_1(p;F_S,F_B)] - \beta\cdot\textrm{OPT-W}(F_S,F_B)\right\}.
\end{align*}
Using the specific form in Section~\ref{sec:preliminaries}, we have
\begin{align*}
    &\mathbb{E}_p[\textrm{W}_1(p;F_S,F_B)] - \beta\cdot\textrm{OPT-W}(F_S,F_B)\\
    =&\mathbb{E}_p\left[\mathbb{E}_s\left[s+\left(\int_p^{\infty}(1-F_B(b))\mathrm{d}b+(1-F_B(p))(p-s)\right)\cdot\mathbf{1}_{p\ge s}\right]\right]- \beta\cdot\mathbb{E}_s\left[s+\int^{\infty}_s (1-F_B(b))\mathrm{d}b\right]\\
    =&\mathbb{E}_{p,s}\left[(1-\beta)s+
    \left(\int_p^{\infty}(1-F_B(b))\mathrm{d}b+(1-F_B(p))(p-s)\right)\cdot\mathbf{1}_{p\ge s}-\beta\int^{\infty}_s (1-F_B(b))\mathrm{d}b\right].
\end{align*}

For convenience, we set $H_B(b):=1-F_B(b)$ and $G_B(b):=\int_b^\infty H_B(p)\mathrm{d}p$. We have
\begin{align*}
    &\mathbb{E}_p[\textrm{W}_1(p;F_S,F_B)] - \beta\cdot\textrm{OPT-W}(F_S,F_B)\\
    =&\mathbb{E}_{p,s}\left[(1-\beta)s+
    \left(G_B(p)+H_B(p)(p-s)\right)\cdot\mathbf{1}_{p\ge s}-\beta G_B(s)\right].
\end{align*}
We also use $\Gamma(\beta,F_B,F_S,F_P)$ to denote $$\mathbb{E}_{p,s}\left[(1-\beta)s+
    \left(G_B(p)+H_B(p)(p-s)\right)\cdot\mathbf{1}_{p\ge s}-\beta G_B(s)\right].$$
    
Fixing $F_B$ and switching the order between $\inf_{F_S\in\Delta_{L^1}(\mathbb{R}_+)}$ and $\sup_{F_P\in\Delta_{L^1}(\mathbb{R}_+)}$, we have
\begin{equation}
     \inf_{F_S\in\Delta_{L^1}(\mathbb{R}_+)}\sup_{F_P\in\Delta_{L^1}(\mathbb{R}_+)}\Gamma(\beta,F_B,F_S,F_P)
    \ge\sup_{F_P\in\Delta_{L^1}(\mathbb{R}_+)}\inf_{F_S\in\Delta_{L^1}(\mathbb{R}_+)}\Gamma(\beta,F_B,F_S,F_P).
\label{ineq:infsup}
\end{equation}

The LHS and RHS of the above inequality correspond to the following two problems:
\begin{itemize}
    \item Given distributions $F_S$ and $F_B$, at least how much can fixed-price mechanisms achieve the optimal welfare?
    \item Given the distribution $F_B$, for any randomized fixed-price mechanism that uses only the knowledge of the buyer distribution, at least how much can it achieve of the optimal welfare for any seller distribution?
\end{itemize}
Based on Inequality~\eqref{ineq:infsup}, if $\beta$ is a lower bound of the solution of the second problem, it also applies to the first problem. Additionally, we prove that approximation ratios of these two problems are indeed equal in Section~\ref{sec:bound}. From now on, we are concerned with the second problem and deal with the RHS of Inequality~\eqref{ineq:infsup}. To prove $\beta$ is a lower bound, it is equivalent to show that given any $F_B\in\Delta_{L^1}(\mathbb{R}_+)$, there exists a price distribution $F_P$ such that the following inequality holds for any $s\in \mathbb{R}_+$:
\begin{align}
(1-\beta)s+\mathbb{E}_{p\sim F_P}[
    \left(G_B(p)+H_B(p)(p-s)\right)\cdot\mathbf{1}_{p\ge s}]-\beta G_B(s) \geq 0.
\label{formula:coef}
\end{align}

For a weakly increasing function $Q_P:\mathbb{R}_+\rightarrow \mathbb{R}_+$ with $Q_P(0)\ge 0$, we introduce $\phi(s,H_B,Q_P)$ to denote the expression
\begin{equation*}
 (1-\beta)s+\int_s^{\infty}
    \left(G_B(p)+H_B(p)(p-s)\right)\mathrm{d}Q_P(p)-\beta G_B(s).
\end{equation*}
Given the buyer distribution $F_B$, we consider the following optimization problem:
\begin{align}
\inf &~ \lim_{p\to\infty}Q_P(p) \notag\\
\text{s.t.}&~
\label{mec:opt}
\phi(s,H_B,Q_P)\geq 0, \forall s\in\mathbb{R}_+, \\
&~Q_P(p_1)\leq Q_P(p_2), \forall p_1,p_2\in\mathbb{R}_+, p_1< p_2. \notag
\end{align}

We prove that the optimal solution actually exists in Theorem~\ref{thm:properties}. Let $Q^*_P$ denote an optimal solution that achieves the optimal objective $\inf \lim_{p\to\infty}Q_P(p)$.
To prove $\beta$ is a lower bound of the approximation ratio, it is equivalent to show that $\lim_{p\to\infty}Q_P^*(p)\le 1$. Here is the idea. If $\lim_{p\to\infty}Q_P^*(p)\le 1$, we set $F_P(0)= Q_P^*(0)-\lim_{p\to\infty}Q_P^*(p) +1$ and $F_P(p)=Q_P^*(p)-Q_P^*(0)+F_P(0)$ for any $p\in \mathbb{R}_+$. It indicates that $\lim_{p\to\infty}F_P(p)=1$. The function $F_P$ is a cumulative distribution function and satisfies Inequality~\eqref{formula:coef}. If $\lim_{p\to\infty}Q_P^*(p) > 1$, there is no way to find a feasible cumulative distribution function $F_P$.

\subsection{The Properties of an Optimal Solution}
\label{sec:proper}
We present some characterizations of an optimal solution $Q_P^*$. Recall that we focus on how to design $Q_P$ to minimize the objective function while satisfying all constraints in Problem~\eqref{mec:opt}. We consider the coefficient concerned with $\mathrm{d}Q_P(p)$ and let $\psi(p,s)=G_B(p)+H_B(p)(p-s)$ and have that 
$$\frac{\partial \psi(p,s)}{\partial p}=H_B'(p)(p-s)\le 0.$$
That is, the function $\psi(p,s)$ weakly decreases in terms of variable $p$. We also know that $\psi(p,s)\ge 0$ for all $p\in[s,\infty)$. For any $p_1<p_2$ and $s$, $\mathrm{d}Q_P(p_1)$ has a weakly larger positive weight than $\mathrm{d}Q_P(p_2)$.
The idea is that we treat $\mathrm{d}Q_P(p)$ as resource and it is better to put resource on the locations with larger weight than that with smaller weight. 

We define $s_2$ as the unique solution to the equation $(1-\beta)s_2=\beta G_B(s_2)$. We provide the form of an optimal solution directly.
\begin{theorem}\label{thm:properties}
Given $F_B$, an optimal solution $Q_P^*$ satisfies that
\begin{itemize}
\item if $s\le s_2$, the closed form of its derivative function $q_P^*(s)$ is
     $$\beta \left(\frac{H_B(s)}{G_B(s)}-\frac{\int_s^{s_2} H_B(t)^2\mathrm{d}t}{G_B(s)^2}\right)+\left(1-\beta\right)\cdot\frac{G_B(s_2)}{G_B(s)^2};$$
\item if $s>s_2$, the equation $Q_P^*(s)=Q_P^*(s_2)$ holds.
\end{itemize}
\end{theorem}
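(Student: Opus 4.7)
The plan is to interpret \eqref{mec:opt} as an infinite-dimensional linear program in the measure $\mathrm{d}Q_P$ and to derive the form of $Q_P^*$ in four steps: set up $s_2$, reduce the support of $Q_P^*$ to $[0,s_2]$, show the constraint $\phi(s)\geq 0$ is tight throughout $[0,s_2]$ by complementary slackness, and finally solve the resulting integro-differential equation. Setting up $s_2$ is immediate: the function $g(s) := (1-\beta)s - \beta G_B(s)$ is continuous with $g(0) = -\beta\,\mathbb{E}[B] < 0$ and $g(s) \to +\infty$ as $s \to \infty$, and has derivative $g'(s) = (1-\beta) + \beta H_B(s) > 0$, so $s_2$ exists and is unique.

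For the second bullet of the theorem, I would show that shifting any mass of $Q_P$ placed in $(s_2, \infty)$ down to the single point $s_2$ preserves both feasibility and the total mass objective. The key is the observation $\partial_p \psi(p, s) \leq 0$ recorded in the excerpt: for $s \leq s_2$ this shift only strengthens $\phi(s)$, while for $s > s_2$ the relocated mass no longer contributes because $\mathbf{1}_{p \geq s}$ vanishes, but $(1-\beta) s - \beta G_B(s) > 0$ by definition of $s_2$, which keeps $\phi(s) \geq 0$ on its own. Hence an optimal $Q_P^*$ can be taken to satisfy $Q_P^*(s) = Q_P^*(s_2)$ for all $s > s_2$, giving the second item of the theorem.

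The heart of the argument is showing that $\phi(s) = 0$ throughout $[0, s_2]$. This is complementary slackness for the measure-valued LP: if $\phi(s) > 0$ on an open subinterval $I \subseteq [0, s_2]$, one can remove a small sliver of mass from the support of $Q_P^*$ near $I$ while keeping every constraint $\phi(\cdot) \geq 0$, strictly decreasing the objective and contradicting optimality. Made rigorous, this step also has to rule out atoms and singular components of $Q_P^*$, so that $Q_P^*$ admits an $L^1$ density $q_P^*$ on $[0, s_2]$; together with an existence argument for the optimum (via weak compactness of feasible measures bounded by the trivial upper bound on total mass), this is the step I expect to be the main technical obstacle.

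Once $\phi \equiv 0$ on $[0, s_2]$, the closed form is a direct computation. Differentiating $\phi(s) = 0$ in $s$ and using $G_B' = -H_B$ yields the integral identity
\begin{equation*}
G_B(s)\, q_P^*(s) + \int_s^{s_2} H_B(p)\, q_P^*(p) \,\mathrm{d}p = \beta H_B(s) + (1-\beta),
\end{equation*}
with boundary value $q_P^*(s_2) = (\beta H_B(s_2) + (1-\beta))/G_B(s_2)$ obtained by setting $s = s_2$. Differentiating once more produces the first-order linear ODE $G_B(s)\, (q_P^*)'(s) - 2 H_B(s)\, q_P^*(s) = -\beta f_B(s)$ with $f_B = -H_B'$. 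The integrating factor $G_B(s)^2$ reduces this to $(G_B(s)^2\, q_P^*(s))' = -\beta G_B(s)\, f_B(s)$; integrating from $s$ to $s_2$ and then applying integration by parts to $\int_s^{s_2} f_B(t)\, G_B(t)\,\mathrm{d}t$ produces the $\int_s^{s_2} H_B(t)^2\,\mathrm{d}t$ term and, after collecting the contributions proportional to $\beta$ and to $1-\beta$, yields exactly the formula asserted in the theorem.
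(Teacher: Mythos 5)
Your step 2 (collapsing all mass above $s_2$ onto $s_2$) and your step 4 (the Volterra equation $G_B(s)q_P^*(s)+\int_s^{s_2}H_B(p)q_P^*(p)\,\mathrm{d}p=(1-\beta)+\beta H_B(s)$, the ODE, and the integrating factor $G_B^2$) are correct and reproduce the stated closed form. The gap is in step 3, which you correctly identify as the heart of the argument but do not actually supply. The perturbation you describe --- ``remove a small sliver of mass from the support of $Q_P^*$ near $I$'' when $\phi>0$ on $I$ --- does not preserve feasibility: mass located at $p_0$ contributes with weight $\psi(p_0,s)\ge 0$ to \emph{every} constraint with $s\le p_0$, not only to constraints with $s$ near $p_0$. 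If some constraint at $s_1<\inf I$ is already tight, deleting mass from inside $I$ strictly decreases $\phi(s_1)$ and violates it; and no redistribution of the remaining mass can compensate while still strictly decreasing the total mass, which is exactly what needs to be ruled out. So the complementary-slackness claim, as stated, is not a proof. On top of this you still owe the existence of an optimizer and the absence of atoms/singular parts on $[0,s_2]$, without which ``differentiating $\phi(s)=0$'' is not licensed; your second differentiation also implicitly assumes $H_B$ has a density.

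The paper avoids all of these obstacles by going in the opposite direction: it \emph{defines} $Q_P^*$ by the closed form, verifies feasibility and $\phi(\cdot,H_B,Q_P^*)\equiv 0$ on $[0,s_2]$ by direct computation (Proposition~\ref{prop:opt}), and then proves optimality with an explicit dual certificate --- the nondecreasing weight function $\theta(s)=G_B(0)\bigl[-H_B(s)\int_0^s G_B(t)^{-2}\mathrm{d}t+G_B(s)^{-1}\bigr]$, chosen so that integrating the constraints against $\mathrm{d}\theta$ plus the $s=0$ constraint yields $\int_0^{s_2}G_B(0)\,\mathrm{d}Q_P+\int_{s_2}^{\infty}\tau_2(p)\,\mathrm{d}Q_P$ with $\tau_2\le G_B(0)$. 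This gives $G_B(0)\lim_{p\to\infty}\hat{Q}_P(p)\ge\xi(H_B,\hat Q_P)\ge\xi(H_B,Q_P^*)=G_B(0)\lim_{p\to\infty}Q_P^*(p)$ for every feasible $\hat Q_P$, which simultaneously proves existence and optimality with no compactness or regularity argument. In effect, the certificate $\theta$ is precisely the object your complementary-slackness step would need to produce; if you want to salvage your primal route, you must either construct it or give a genuinely valid exchange argument, and the guess-and-verify direction is the cleaner way to do so.
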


Note that $Q_P^*$ is continuous and it is differentiable in $[0, s_2]$. We first show that $Q_P^*$ satisfies constraints of Problem~\eqref{mec:opt}.
\begin{proposition}\label{prop:opt}
The function $Q_P^*$ in Theorem~\ref{thm:properties} satisfies:
\begin{itemize}
\item For any $s\le s_2$, it holds that $\phi(s,H_B,Q_P^*)=0$.
\item For any $s>s_2$, it holds that $\phi(s,H_B,Q_P^*)>0$.
\item For any $p_1,p_2\in\mathbb{R}_+$ and $p_1< p_2$, it holds that $Q_P^*(p_1)\le Q_P^*(p_2)$.
\end{itemize}
\end{proposition}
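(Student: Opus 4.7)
The plan is to verify the three bullets by direct calculation from the piecewise formula for $Q_P^*$ given in Theorem~\ref{thm:properties}. The second bullet (the $s>s_2$ case) is essentially immediate, monotonicity reduces to a one-line derivative argument, and the main work sits in verifying $\phi(s,H_B,Q_P^*)=0$ on $[0,s_2]$.

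For the second bullet, since $Q_P^*(p)=Q_P^*(s_2)$ for $p>s_2$, the integral $\int_s^{\infty}(G_B(p)+H_B(p)(p-s))\,\mathrm{d}Q_P^*(p)$ vanishes when $s>s_2$, leaving $\phi(s,H_B,Q_P^*)=(1-\beta)s-\beta G_B(s)$. This expression equals zero at $s=s_2$ by the defining equation of $s_2$, and its derivative $(1-\beta)+\beta H_B(s)$ is strictly positive, so the expression is strictly positive on $(s_2,\infty)$. For the third bullet, $q_P^*$ vanishes on $(s_2,\infty)$, so it suffices to show $q_P^*(s)\ge 0$ on $[0,s_2]$. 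Writing $q_P^*(s)=R(s)/G_B(s)^2$ with
$$R(s):=\beta H_B(s) G_B(s)-\beta \int_s^{s_2} H_B(t)^2\,\mathrm{d}t+(1-\beta) G_B(s_2),$$
a direct computation (using $G_B'(s)=-H_B(s)$ and Leibniz on the variable lower limit) gives $R'(s)=\beta H_B'(s) G_B(s)\le 0$ since $H_B$ is non-increasing. Hence $R$ is weakly decreasing on $[0,s_2]$, and $R(s_2)=[\beta H_B(s_2)+(1-\beta)]G_B(s_2)\ge 0$, so $R\ge 0$ throughout.

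For the first bullet I would use a boundary-value-plus-derivative argument. At $s=s_2$ the calculation from the second bullet gives $\phi(s_2,H_B,Q_P^*)=(1-\beta)s_2-\beta G_B(s_2)=0$. Differentiating $\phi$ under the integral,
$$\phi'(s)=(1-\beta)+\beta H_B(s)-G_B(s)q_P^*(s)-\int_s^{s_2}H_B(p)q_P^*(p)\,\mathrm{d}p.$$
Substituting the closed form of $G_B(s)q_P^*(s)$ and using $G_B(s)-G_B(s_2)=\int_s^{s_2}H_B(t)\,\mathrm{d}t$, one shows that $\phi'(s)=0$ is equivalent to the identity
$$\int_s^{s_2}H_B(p)q_P^*(p)\,\mathrm{d}p=\frac{1}{G_B(s)}\Bigl[(1-\beta)\int_s^{s_2}H_B(t)\,\mathrm{d}t+\beta\int_s^{s_2}H_B(t)^2\,\mathrm{d}t\Bigr].$$
Both sides vanish at $s=s_2$, and a quotient-rule differentiation of the right-hand side recovers precisely the closed form of $q_P^*(s)$ from Theorem~\ref{thm:properties} (indeed, this is how one would derive the formula from the integral equation $\phi\equiv 0$). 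So the $s$-derivatives agree, the identity holds, and combining with $\phi(s_2,H_B,Q_P^*)=0$ yields $\phi\equiv 0$ on $[0,s_2]$.

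The main obstacle is nothing conceptual, but keeping the algebra honest: signs coming from $G_B'(s)=-H_B(s)$ and from applying the Leibniz rule to integrals with a variable lower limit are easy to misplace. I would therefore carry out the verification in the direction ``differentiate the right-hand side of the displayed identity and check it equals $-H_B(s)q_P^*(s)$'' rather than attempting to integrate $H_B\,q_P^*$ directly, since the first direction reduces the check to matching the known closed form of $q_P^*$.
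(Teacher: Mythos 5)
Your proposal is correct and follows essentially the same route as the paper: the second bullet is handled identically, and for the first bullet the paper establishes the very identity you display (in the form $G_B(s)\int_s^{s_2}H_B(p)q_P^*(p)\,\mathrm{d}p=\beta\int_s^{s_2}H_B(t)^2\,\mathrm{d}t+(1-\beta)\int_s^{s_2}H_B(t)\,\mathrm{d}t$) and then differentiates to get $\phi'\equiv 0$ on $[0,s_2]$, combined with $\phi(s_2)=0$, just as you do. The only substantive difference is the third bullet, where the paper merely asserts $q_P^*(s)>0$ on $[0,s_2]$ while your decomposition $q_P^*=R/G_B^2$ with $R'=\beta H_B'G_B\le 0$ and $R(s_2)=[\beta H_B(s_2)+1-\beta]\,G_B(s_2)>0$ actually proves it.
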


\begin{proof}
Given any $s\le s_2$, we rewrite $\int_s^{s_2} H_B(p)q_P^*(p)\mathrm{d}p$ as:
\begin{align*}
&\int_s^{s_2}H_B(p)\left[\beta \left(\frac{H_B(p)}{G_B(p)}-\frac{\int_p^{s_2} H_B(t)^2\mathrm{d}t}{G_B(p)^2}\right)+\left(1-\beta\right)\cdot\frac{G_B(s_2)}{G_B(p)^2}\right]\mathrm{d}p\\
=&\frac{1}{G_B(s)}\cdot \left[\beta \int_s^{s_2} H_B(t)^2\mathrm{d}t+(1-\beta)\int_s^{s_2} H_B(t)\mathrm{d}t\right].
\end{align*}
Therefore, we get
$$G_B(s)\int_s^{s_2} H_B(p)q_P^*(p)\mathrm{d}p
    =\beta \int_s^{s_2} H_B(t)^2\mathrm{d}t+(1-\beta)\int_s^{s_2} H_B(t)\mathrm{d}t.$$
Next, we calculate the derivative of the above formula about $s$ and get
\begin{equation*}
H_B(s)\int_s^{s_2} H_B(p)q_P^*(p)\mathrm{d}p+G_B(s)H_B(s)q_P^*(s)
=\beta H_B(s)^2+(1-\beta) H_B(s).
\end{equation*}
We know that $G_B(s_2)=\frac{(1-\beta)s_2}{\beta}>0$. It indicates that $H_B(s)>0$ for any $s\le s_2$. Thus, we divide every term by $H_B(s)$ and get 
$$(1-\beta)-\int_s^{s_2} H_B(p)q_P^*(p)\mathrm{d}p\\
    -G_B(s)q_P^*(s)+\beta H_B(s)=0.$$
Notice that the equation $Q_P^*(s)=Q_P^*(s_2)$ holds if $s>s_2$. It means that $\int_{s_2}^{\infty} H_B(p)\mathrm{d}Q_P^*(p)=0$. Therefore, we plug $\int_{s_2}^{\infty} H_B(p)\mathrm{d}Q_P^*(p)$ into the above equation and get
$$(1-\beta)-\int_s^{\infty} H_B(p)\mathrm{d}Q_P^*(p)
    -G_B(s)q_P^*(s)+\beta H_B(s)=0.$$
We consider the integral. Note that $(1-\beta)s_2=\beta G_B(s_2)$. We have
$$(1-\beta)s+\int_{s}^{\infty}[G_B(p)+H_B(p)(p-s)]\mathrm{d}Q_P^*(p)-\beta G_B(s)=0.$$
We finish the proof of the first statement that $\phi(s,H_B,Q_P^*)=0$ for any $s\le s_2$.

For the second statement, we have
$\phi(s,H_B,Q_P^*)=(1-\beta)s-\beta G_B(s)$ for any $s> s_2$ since $Q_P^*(s)=Q_P^*(s_2)$. Considering the derivative about $s$ and we have $\phi'(s,H_B,Q_P^*)=1-\beta+\beta H_B(s)>0$. We know that $\phi(s_2,H_B,Q_P^*)=0$. Therefore, we finish the proof that $\phi(s,H_B,Q_P^*)>0$ for any $s>s_2$.

For the third statement, if $s\le s_2$, the derivative function $q_P^*(s)>0$. And if $s>s_2$, the equation $Q_P^*(s)=Q_P^*(s_2)$ holds. The function $Q_P^*$ is non-decreasing. Thus, it holds that $Q_P^*(p_1)\le Q_P^*(p_2)$ for any $p_1,p_2\in\mathbb{R}_+$ and $p_1< p_2$.

To summarize, the function $Q_P^*$ satisfies allconstraints of Problem~\eqref{mec:opt}.\qedhere
\end{proof}

Next, we present that for any $\hat{Q}_P$ satisfying constraints of Problem ~\eqref{mec:opt}, it holds that $\lim_{p\to\infty}\hat{Q}_P(p)\ge \lim_{p\to\infty}Q_P^*(p)$. 
Therefore, $Q_P^*$ is an optimal solution.
\begin{proof}[Proof of Theorem ~\ref{thm:properties}]
From Proposition~\ref{prop:opt}, we notice that $\phi(s,H_B,Q_P^*)=0$ for any $s\le s_2$. Therefore, for any $\hat{Q}_P$ satisfying constraints of Problem ~\eqref{mec:opt}, we have
\begin{equation}\label{ineq:xi}
\phi(s,H_B,Q_P^*)\le\phi(s,H_B,\hat{Q}_P), \forall s\le s_2.
\end{equation}

Let $\theta(s):=G_B(0)\cdot\left[-H_B(s)\int_0^s\frac1{G_B(t)^2}\mathrm{d}t+\frac1{G_B(s)}\right].$ Note that $\theta'(s)=-G_B(0)H_B'(s)\int_0^s\frac1{G_B(t)^2}\mathrm{d}t\ge 0$. We define the function $$\xi(H_B,Q_P):=\int_0^{\infty}\psi(p,0)\mathrm{d}Q_P(p)+\int_0^{s_2}\int_s^{\infty}\psi(p,s)\mathrm{d}Q_P(p)\mathrm{d}\theta(s).$$

We know that $\xi(H_B,Q_P^*)\le \xi(H_B,\hat{Q}_P)$ by Inequality~\eqref{ineq:xi}. Next, we rewrite $\xi(H_B,Q_P)$ as:
\begin{align*}
&\int_0^{s_2}\theta(p)G_B(p)\mathrm{d}Q_P(p)+\int_0^{s_2}H_B(p)\int_0^p\theta(s)\mathrm{d}s\mathrm{d}Q_P(p)\\
+&\theta(s_2)\int_{s_2}^{\infty}\psi(p,s_2)\mathrm{d}Q_P(p)+\int_{s_2}^{\infty}H_B(p)\int_0^{s_2}\theta(s)\mathrm{d}s\mathrm{d}Q_P(p).
\end{align*}
The function $\xi(H_B,Q_P)$ can be represented by $\int_0^{s_2}\tau_1(p)\mathrm{d}Q_P(p)+\int_{s_2}^{\infty}\tau_2(p)\mathrm{d}Q_P(p)$. Note that $\int_0^p\theta(s)\mathrm{d}s=G_B(0)G_B(p)\int_0^p\frac1{G_B(t)^2}\mathrm{d}t$, hence we have that
$$\tau_1(p)=\theta(p)G_B(p)+H_B(p)\int_0^p\theta(s)\mathrm{d}s=G_B(0).$$

As for $\tau_2(p)$, we have $$\tau_2(p)=\theta(s_2)\psi(p,s_2)+H_B(p)\int_0^{s_2}\theta(s)\mathrm{d}s.$$ The derivative $\tau_2'(p)=H_B'(p)[\theta(s_2)(p-s_2)+\int_0^{s_2}\theta(s)\mathrm{d}s]\le0$ for any $p\ge s_2$. Thus, we have $\tau_2(p)\le \tau_2(s_2)=G_B(0)$. 

Consequently, we have
\begin{align*}
&\int_0^{\infty}G_B(0)\mathrm{d}\hat{Q}_P(p)\\
\ge&\int_0^{s_2}G_B(0)\mathrm{d}\hat{Q}_P(p)+\int_{s_2}^{\infty}\tau_2(p)\mathrm{d}\hat{Q}_P(p)\\
\ge&\int_0^{s_2}G_B(0)\mathrm{d}Q_P^*(p)+\int_{s_2}^{\infty}\tau_2(p)\mathrm{d}Q_P^*(p)\\
=&\int_0^{\infty}G_B(0)\mathrm{d}Q_P^*(p).
\end{align*}
The second step is because that $\xi(H_B,Q_P^*)\le \xi(H_B,\hat{Q}_P)$.  The last step is because that $Q_P^*(s)=Q_P^*(s_2)$ for any $s>s_2$. Finally, we have $\lim_{p\to\infty}\hat{Q}_P(p)\ge \lim_{p\to\infty}Q_P^*(p)$.
\end{proof}

\subsection{The Bound of the Approximation Ratio}
\label{sec:bound}
Recall that $\beta$ is an lower bound if and only if $\lim_{p\to\infty}Q_P^*(p)\le 1$.
Considering properties of $Q_P^*$ in Theorem~\ref{thm:properties}, we use the result below to figure out the lower bound of the exact approximation ratio.
\begin{theorem}
Given $\beta>0$, $\beta$ is a lower bound of the approximation ratio if and only if for any $F_B\in\Delta_{L^1}(\mathbb{R_+})$, it satisfies that
\begin{equation*}
\int^{s_2}_0\left[\beta \left(\frac{H_B(s)}{G_B(s)}-\frac{\int^{s_2}_s H_B(t)^2\mathrm{d}t}{G_B(s)^2}\right)+\left(1-\beta\right)\frac{G_B(s_2)}{G_B(s)^2}\right]\mathrm{d}s\le 1.
\end{equation*}
\label{thm:approx}
\end{theorem}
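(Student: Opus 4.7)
The plan is to deduce this theorem directly from Theorem~\ref{thm:properties} together with the reduction summarized in the paragraph immediately preceding Section~\ref{sec:proper}. That reduction says that $\beta$ is a lower bound of the approximation ratio if and only if, for every buyer distribution $F_B$, an optimal solution $Q_P^*$ of Problem~\eqref{mec:opt} satisfies $\lim_{p\to\infty}Q_P^*(p)\le 1$. So the entire task reduces to rewriting $\lim_{p\to\infty}Q_P^*(p)$ explicitly as the integral appearing in the statement.

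First I would observe that one may take $Q_P^*(0)=0$ without loss of generality. Indeed, $\phi(s,H_B,Q_P^*)$ depends on $Q_P^*$ only through $\int_s^{\infty}[G_B(p)+H_B(p)(p-s)]\,\mathrm{d}Q_P^*(p)$, so replacing $Q_P^*$ by $Q_P^*-Q_P^*(0)$ preserves feasibility in Problem~\eqref{mec:opt} (monotonicity is kept and the lower bound $Q_P(0)\ge 0$ is now tight) while it can only decrease the objective $\lim_{p\to\infty}Q_P^*(p)$. Hence the infimum is attained with $Q_P^*(0)=0$.

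Next, I would invoke Theorem~\ref{thm:properties}: the optimal $Q_P^*$ is absolutely continuous on $[0,s_2]$ with derivative
\[
q_P^*(s)=\beta\left(\frac{H_B(s)}{G_B(s)}-\frac{\int_s^{s_2}H_B(t)^2\,\mathrm{d}t}{G_B(s)^2}\right)+(1-\beta)\,\frac{G_B(s_2)}{G_B(s)^2},
\]
and constant beyond $s_2$, i.e.\ $Q_P^*(s)=Q_P^*(s_2)$ for every $s>s_2$. Combined with the normalization $Q_P^*(0)=0$ from the previous step, this yields
\[
\lim_{p\to\infty}Q_P^*(p)=Q_P^*(s_2)=\int_0^{s_2}q_P^*(s)\,\mathrm{d}s,
\]
which is precisely the left-hand side of the inequality in the theorem statement.

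Chaining the two facts gives the desired equivalence: $\lim_{p\to\infty}Q_P^*(p)\le 1$ for every $F_B$ if and only if the integral inequality holds for every $F_B$, and this in turn is equivalent to $\beta$ being a lower bound of the approximation ratio. The main conceptual obstacle, namely showing that the explicit $Q_P^*$ of Theorem~\ref{thm:properties} is actually the optimizer of Problem~\eqref{mec:opt}, has already been discharged by Proposition~\ref{prop:opt} and the minimality argument in the proof of Theorem~\ref{thm:properties}; what remains here is essentially a substitution together with the harmless translation normalizing $Q_P^*(0)$ to $0$.
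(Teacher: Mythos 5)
Your proposal is correct and follows essentially the same route as the paper, which states Theorem~\ref{thm:approx} without a separate proof precisely because it is the immediate combination of the reduction ``$\beta$ is a lower bound iff $\lim_{p\to\infty}Q_P^*(p)\le 1$'' with the closed form of $q_P^*$ from Theorem~\ref{thm:properties}. Your explicit normalization $Q_P^*(0)=0$ (justified by shift-invariance of $\phi$ and monotonicity of the objective under downward translation) is a detail the paper leaves implicit, and it is handled correctly.
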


Since both $G_B(s)$ and $\int^{s_2}_s H_B(t)^2\mathrm{d}t$ can be expressed by the function $H_B$. Therefore, the objective above is only determined by $H_B$. It is easy to see that the function $H_B$ is scale free in terms of the variable $s$, that is, $\hat{H}_B(s)=H_B(\sigma s)$ for $\sigma>0$ achieving the same objective. WLOG we assume that $s_2=1$ and search all possible functions $H_B\in\Delta_{L^1}[0,1]$ which are weakly decreasing to maximize the objective $\mathrm{obj}(H_B)$:
\begin{equation*}
\int^1_0\left[\beta \left(\frac{H_B(s)}{G_B(s)}-\frac{\int^1_s H_B(t)^2\mathrm{d}t}{G_B(s)^2}\right)+\left(1-\beta\right)\frac{G_B(1)}{G_B(s)^2}\right]\mathrm{d}s,
\end{equation*}
where $(1-\beta)\cdot1=\beta\cdot G_B(1)$. That is $G_B(1)=\frac{1-\beta}{\beta}$.

We know that $1\ge H_B(0)\ge H_B(1)> 0$, in which $H_B(1)>0$ is due to $G_B(1)>0$. However, we allow that $H_B(1)$ equals zero by considering the extreme case that there is a very large number with a very low probability in the buyer distribution. 
Assuming that $H_B^*$ is the optimal solution and setting $z_1:=\sup\{z|H_B^*(z)=1\}$ and $z_2:=\inf\{z|H_B^*(z)=0\}$, we present the characterization of $H_B^*$.
\begin{theorem}
The optimal function $H_B^*$ achieving the maximum of the objective $\mathrm{obj}(H_B)$ satisfies that, for any  $z\in [z_1, z_2]$, we have
       \begin{equation*}
{H_B^*}'(z)=\frac{\int^{z_2}_z H_B^*(t)^2\mathrm{d}t-H_B^*(z)G_B^*(z)-G_B(1)^2}{G_B^*(z)^3\int^z_0 \frac1{G_B^*(t)^2}\mathrm{d}t}.  
\end{equation*}
\end{theorem}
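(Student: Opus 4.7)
The plan is to derive the stated formula as the Euler--Lagrange condition at interior points $z \in (z_1, z_2)$ for the variational problem of maximizing $\mathrm{obj}(H_B)$ over weakly decreasing $H_B$, treating the boundary value $G_B(1) = (1-\beta)/\beta$ as fixed by the normalization $s_2 = 1$. The first step is to simplify the objective: applying $H_B(s) = -G_B'(s)$ in the first piece and Fubini in the second converts it into
\begin{equation*}
\mathrm{obj}(H_B) = \beta\log\frac{G_B(0)}{G_B(1)} + (1-\beta)\,G_B(1)\,\Phi(1) - \beta\int_0^1 H_B(t)^2\,\Phi(t)\,\mathrm{d}t,
\end{equation*}
where $\Phi(t) := \int_0^t G_B(s)^{-2}\,\mathrm{d}s$ and the $G_B(1)$-factors are constants.

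Next I would perturb $H_B^*$ by $\epsilon\eta$ with $\eta$ smooth and supported in $(z_1, z_2)$, so that $H_B^* + \epsilon\eta$ remains weakly decreasing for small $\epsilon$ and $G_B(1)$ is preserved (the mass beyond $1$ is untouched). Writing $\delta G_B(s) = \int_s^1 \eta(u)\,\mathrm{d}u$ and swapping the order of integration repeatedly to pull $\eta$ outside, the Gâteaux derivative collapses to
\begin{equation*}
\int_0^1 \eta(t)\left[\frac{\beta}{G_B^*(0)} - 2(1-\beta)G_B(1)\,\Psi(t) - 2\beta H_B^*(t)\,\Phi(t) + 2\beta\int_0^t \frac{M(s)}{G_B^*(s)^3}\,\mathrm{d}s\right]\mathrm{d}t,
\end{equation*}
where $\Psi(t) := \int_0^t G_B^*(s)^{-3}\,\mathrm{d}s$ and $M(s) := \int_s^1 H_B^*(u)^2\,\mathrm{d}u$. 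The fundamental lemma of calculus of variations then forces the bracketed integrand to vanish identically on $(z_1, z_2)$.

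Differentiating this identity in $t$ eliminates the constant $\beta/G_B^*(0)$; substituting $\Phi'(t) = 1/G_B^*(t)^2$, $\Psi'(t) = 1/G_B^*(t)^3$, $M'(t) = -H_B^*(t)^2$, and using the key algebraic identity $(1-\beta)G_B(1)/\beta = G_B(1)^2$ (because $G_B(1) = (1-\beta)/\beta$), I obtain after rearrangement
\begin{equation*}
{H_B^*}'(t)\,\Phi(t) = \frac{M(t) - H_B^*(t)\,G_B^*(t) - G_B(1)^2}{G_B^*(t)^3}.
\end{equation*}
Dividing through by $\Phi(t)$ and replacing $M(t)$ by $\int_t^{z_2} H_B^*(s)^2\,\mathrm{d}s$ (using $H_B^* \equiv 0$ beyond $z_2$) yields exactly the theorem's formula.

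The main obstacle is making this variational argument rigorous under the monotonicity constraint. Where $H_B^*$ is strictly decreasing, any smooth $\eta$ supported there gives an admissible perturbation for small $\epsilon$ and the argument above applies verbatim; the subtle case is a potential flat plateau of $H_B^*$ inside $(z_1, z_2)$, on which admissible perturbations must themselves be weakly decreasing, a much smaller class of test functions. The standard resolution is a bootstrap: establish the ODE on the strictly-decreasing portion first, then verify from the derived formula that ${H_B^*}'(t) < 0$ throughout (so no plateau can exist), and extend to the closure $[z_1, z_2]$ by continuity.
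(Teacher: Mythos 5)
Your proposal is correct and takes essentially the same route as the paper: relax the monotonicity constraint, impose the variational first-order condition, differentiate the resulting stationarity identity in $z$ using $G_B(1)=(1-\beta)/\beta$, and verify a posteriori that ${H_B^*}'<0$ so the relaxed optimum is feasible. Your G\^ateaux-derivative bracket is algebraically identical to the paper's (its first two terms $\beta/G_B^*(z)-\int_0^z \beta H_B^*(s)G_B^*(s)^{-2}\,\mathrm{d}s$ collapse to your constant $\beta/G_B^*(0)$), and the preliminary Fubini/log simplification of the objective is only cosmetic.
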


\begin{proof}
To the begin with, we relax the constraint that $H_B^*$ is weakly decreasing on the interval $[z_1,z_2]$. Therefore, we treat the optimization problem as a variational problem. We use $H_B(z)=H_B^*(z)+\epsilon\eta(z)$ to denote an arbitrary function defined on $[z_1,z_2]$. Note that $H_B\in\Delta_{L^1}[z_1,z_2]$ does not affect the integral $\int_{z_2}^1 q_P^*(z)\mathrm{d}z$. Thus, we define
\begin{equation*}
I(H_B)=\int^{z_2}_0\left[\beta \left(\frac{H_B(z)}{G_B(z)}-\frac{\int^1_z H_B(t)^2\mathrm{d}t}{G_B(z)^2}\right)+\left(1-\beta\right)\frac{G_B(1)}{G_B(z)^2}\right]\mathrm{d}z.
\end{equation*}
Considering $\frac{\partial I}{\partial\epsilon}\bigg|_{\epsilon=0}$, we have
\begin{align*}
\int^{z_2}_{z_1}\bigg[&\frac{\beta}{G_B^*(z)}-\int^z_0 \frac{\beta H_B^*(s)}{G_B^*(s)^2}\mathrm{d}s-\int^z_0 \frac{2\beta H_B^*(z)}{G_B^*(s)^2}\mathrm{d}s\\
-&\int^z_0 \left(-\beta\int^{z_2}_s H_B^*(t)^2\mathrm{d}t+(1-\beta)G_B(1)\right)\cdot\frac2{G_B^*(s)^3}\mathrm{d}s\bigg]\eta(z)\mathrm{d}z.
\end{align*}

By the first order condition, we have $\frac{\partial I}{\partial\epsilon}\bigg|_{\epsilon=0}=0$. Notice that $\eta(z)$ can be arbitrary. Therefore, the coefficient of $\eta(z)$ should be always zero, so we have that
\begin{equation*}
\frac1{G_B^*(z)}=\int^z_0 \left[\frac{H_B^*(s)}{G_B^*(s)^2}+\frac{2H_B^*(z)}{G_B^*(s)^2}+\frac{2G_B(1)^2-2\int^{z_2}_s H_B^*(t)^2\mathrm{d}t}{G_B^*(s)^3}\right]\mathrm{d}s.
\end{equation*}

We calculate the derivative and have
 \begin{equation*}
{H_B^*}'(z)=\frac{\int^{z_2}_z H_B^*(t)^2\mathrm{d}t-H_B^*(z)G_B^*(z)-G_B(1)^2}{G_B^*(z)^3\int^z_0 \frac1{G_B^*(t)^2}\mathrm{d}t}.  
\end{equation*}

Next, we demonstrate that ${H_B^*}'(z)<0$, that is, the function $H_B^*$ satisfies the decreasing constraint naturally. It implies that $H_B^*$ is actually the optimal solution that achieves the maximum of the objective $\mathrm{obj}(H_B)$. We prove that ${H_B^*}'(z)<0$ by contradiction. Suppose that $z_3=\sup\{z|{H_B^*}'(z)\ge0\}$. It indicates that ${H_B^*}'(z)<0$ for any $z\in(z_3,z_2]$. For the numerator, given a sufficient small $\delta>0$, we consider
\begin{align*}
&\int^{z_2}_{z_3} H_B^*(t)^2\mathrm{d}t-H_B^*(z_3)G_B^*(z_3)-G_B(1)^2,\\
&\int^{z_2}_{z_3+\delta} H_B^*(t)^2\mathrm{d}t-H_B^*(z_3+\delta)G_B^*(z_3+\delta)-G_B(1)^2.
\end{align*}
We know that $H_B^*(z_3)=H_B^*(z_3+\delta)-\delta{H_B^*}'(z_3+\delta)$. Take the difference of above formulas into account.  We have
\begin{align*}
&\delta{H_B^*}'(z_3+\delta)G_B^*(z_3)-\int^{z_3+\delta}_{z_3} H_B^*(t)\left(H_B^*(t)-H_B^*(z_3+\delta)\right)\mathrm{d}t\\
<&\delta{H_B^*}'(z_3+\delta)G_B^*(z_3)-\delta^2 H_B^*(z_3){H_B^*}'(z_3+\delta)\\
=&\delta{H_B^*}'(z_3+\delta)\left(G_B^*(z_3)-\delta H_B^*(z_3)\right)
\end{align*}
Note that ${H_B^*}'(z_3+\delta)<0$. Thus, the difference is less than 0.  We derive that ${H_B^*}'(z_3)<0$, which contradicts with the assumption that ${H_B^*}'(z_3)\ge0$.
\end{proof}

\begin{remark}
Although we get the structure of $H_B^*$, it is hard to figure out the expression of $H_B^*$ explicitly. Therefore, we choose the numerical method to get the lower bound later.
\end{remark}

In addition, suppose that $\hat{\beta}$ is the approximation ratio only using the knowledge of buyer distribution. 
We prove that $\hat{\beta}$ is also the approximation ratio with full knowledge of buyer and seller distributions. 
We have the following theorem:
\begin{theorem}\label{thm:same beta}
The randomized fixed-price mechanism that  only uses the knowledge of $F_B$ has the same approximation ratio as the fixed-price mechanism that uses both $F_B$ and $F_S$. Additionally, for the worst pair $(\hat{F}_B,\hat{F}_S)$, we have
$$\hat{F}_S(p)=G_B(1)\left[-\hat{H}_B(p)\int_0^p\frac1{\hat{G}_B(s)^2}\mathrm{d}s+\frac1{\hat{G}_B(p)}\right].$$
\end{theorem}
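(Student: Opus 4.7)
The plan is to exhibit a worst pair $(\hat{F}_B,\hat{F}_S)$ for which $\sup_{F_P}\Gamma(\hat{\beta},\hat{F}_B,\hat{F}_S,F_P)=0$, which collapses Inequality~\eqref{ineq:infsup} to an equality. Take $\hat{F}_B$ to be the extremal buyer distribution from the preceding theorem, and define $\hat{F}_S$ by the formula in the statement. The motivation is that $\hat{F}_S$ equals the auxiliary function $\theta$ from the proof of Theorem~\ref{thm:properties} rescaled by $G_B(1)/\hat{G}_B(0)$: it plays the role of the dual certificate, so that $(Q_P^*,\hat{F}_S)$ becomes a saddle point of $\Gamma(\hat{\beta},\hat{F}_B,\cdot,\cdot)$.

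Step 1 is to verify that $\hat{F}_S$ is a valid cumulative distribution function. Differentiating yields $\hat{f}_S(p)=-G_B(1)\hat{H}_B'(p)\int_0^p 1/\hat{G}_B(s)^2\,\mathrm{d}s\ge 0$ on $[0,s_2]$, so $\hat{F}_S$ is non-decreasing. The boundary value $\hat{F}_S(0)=G_B(1)/\hat{G}_B(0)\in[0,1]$ since $\hat{G}_B$ is non-increasing, and the normalization $G_B(1)=(1-\hat{\beta})/\hat{\beta}$ together with $\hat{H}_B(s_2)=0$ (from the extremal structure of $\hat{F}_B$) gives $\hat{F}_S(s_2)=1$. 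Extend $\hat{F}_S(p)=1$ for $p>s_2$; the support of $\hat{F}_S$ then lies in $[0,s_2]$, matching the complementary slackness region $\{\phi(s,\hat{H}_B,Q_P^*)=0\}$ from Proposition~\ref{prop:opt}.

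Step 2 is the inner identity $\int_0^p\psi(p,s)\,\mathrm{d}\hat{F}_S(s)=G_B(1)$ for all $p\in[0,s_2]$, where $\psi(p,s)=\hat{G}_B(p)+\hat{H}_B(p)(p-s)$. An integration by parts using the explicit form of $\hat{F}_S$ gives $\int_0^p(p-s)\,\mathrm{d}\hat{F}_S(s)=G_B(1)\hat{G}_B(p)\int_0^p 1/\hat{G}_B(t)^2\,\mathrm{d}t$; substituting this along with the formula for $\hat{F}_S(p)$ into $\hat{G}_B(p)\hat{F}_S(p)+\hat{H}_B(p)\int_0^p(p-s)\,\mathrm{d}\hat{F}_S(s)$ collapses algebraically to the constant $G_B(1)$. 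For $p>s_2$ no new mass is added, and the $p$-derivative equals $\hat{H}_B'(p)\int_0^{s_2}(p-s)\,\mathrm{d}\hat{F}_S(s)\le 0$ since $\hat{H}_B'\le 0$, so the integral stays $\le G_B(1)$ throughout.

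Step 3 is the matching outer identity $\int[(1-\hat{\beta})s-\hat{\beta}\hat{G}_B(s)]\,\mathrm{d}\hat{F}_S(s)=-G_B(1)$. The slick route is to integrate Step 2 against $\mathrm{d}Q_P^*(p)$: Fubini gives $\int_0^{s_2}\int_0^p\psi(p,s)\,\mathrm{d}\hat{F}_S(s)\,\mathrm{d}Q_P^*(p)=G_B(1)\cdot\lim_{p\to\infty}Q_P^*(p)=G_B(1)$ on one hand (using $\lim Q_P^*=1$ for the worst $\hat{F}_B$), and on the other, swapping and invoking $\phi(s,\hat{H}_B,Q_P^*)=0$ on $[0,s_2]$ from Proposition~\ref{prop:opt} yields $\int_s^{s_2}\psi(p,s)\,\mathrm{d}Q_P^*(p)=\hat{\beta}\hat{G}_B(s)-(1-\hat{\beta})s$, so the double integral equals $-\int[(1-\hat{\beta})s-\hat{\beta}\hat{G}_B(s)]\,\mathrm{d}\hat{F}_S(s)$. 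Combining Steps 2 and 3, $\sup_{F_P}\Gamma(\hat{\beta},\hat{F}_B,\hat{F}_S,F_P)=-G_B(1)+G_B(1)=0$, so the full-knowledge ratio at this pair equals $\hat{\beta}$, matching Inequality~\eqref{ineq:infsup}. The main obstacle is identifying the right $\hat{F}_S$; once its formula (and its origin as the normalized dual certificate $\theta$ from Theorem~\ref{thm:properties}) is in hand, the two matching identities follow by routine integration by parts and a clean Fubini trick.
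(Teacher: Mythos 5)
Your proposal is correct and follows essentially the same route as the paper: the heart of both arguments is the identity $\hat F_S(p)\hat G_B(p)+\hat H_B(p)\int_0^p\hat F_S(s)\,\mathrm{d}s=\hat G_B(1)$, which makes the gain from trade constant over the support of the optimal price distribution and collapses Inequality~\eqref{ineq:infsup} to an equality at the pair $(\hat F_B,\hat F_S)$. Your Steps 1 and 3 (checking that $\hat F_S$ is a bona fide CDF, and deriving the zero saddle value via Fubini together with the complementary-slackness relation $\phi(s,\hat H_B,Q_P^*)=0$ on $[0,s_2]$) are somewhat more explicit than the paper's direct appeal to the optimality of $\hat F_P$, but the substance is the same.
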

It means that given $\hat{F}_B$, the following equality holds
\begin{equation*}
\inf_{F_S\in\Delta_{L^1}(\mathbb{R}_+)}\sup_{F_P\in\Delta_{L^1}(\mathbb{R}_+)}\Gamma(\hat{\beta},\hat{F}_B,F_S,F_P)
=\sup_{F_P\in\Delta_{L^1}(\mathbb{R}_+)}\inf_{F_S\in\Delta_{L^1}(\mathbb{R}_+)}\Gamma(\hat{\beta},\hat{F}_B,F_S,F_P).
\end{equation*}
The proof idea is that we can find a seller's distribution such that $\textrm{W}_1(p;F_S,\hat{F}_B) - \hat{\beta}\cdot\textrm{OPT-W}(F_S,\hat{F}_B)$ is identical for any price $p\in[0,1]$. And the seller's distribution is exactly $\hat{F}_S$. 

\begin{proof}
First, we show that gain from trade $\textrm{GFT}(p;\hat{F}_S,\hat{F}_B)$ is identical for all $p\in[0,1]$. We have
\begin{align*}    
\hat{F}_S(p)=&\hat{G}_B(1)\left[-\hat{H}_B(p)\int_0^p\frac1{\hat{G}_B(s)^2}\mathrm{d}s+\frac1{\hat{G}_B(p)}\right]\\
=&\left(\hat{G}_B(p)\int_0^p\frac{\hat{G}_B(1)}{\hat{G}_B(s)^2}\mathrm{d}s\right)'.
\end{align*}

Thus, we have
$$\frac{\int_0^p \hat{F}_S(s)\mathrm{d}s}{\hat{G}_B(p)}=\int_0^p\frac{\hat{G}_B(1)}{\hat{G}_B(s)^2}\mathrm{d}s.$$

We calculate the derivative and get
\[\hat{F}_S(p)\hat{G}_B(p)+\hat{H}_B(p)\int_0^p \hat{F}_S(s)\mathrm{d}s=\hat{G}_B(1).\]

We rewrite $\textrm{GFT}(p;\hat{F}_S,\hat{F}_B)$ as:
\begin{align*}
&\mathbb{E}_{B,S}[(b-s)\cdot\mathbf{1}_{b\geq p \geq s}] \\
    =&\Pr(s\le p)\mathbb{E}_{B}[(b-p)\cdot\mathbf{1}_{b\geq p}]+\Pr(b\ge p)\mathbb{E}_S[(p-s)\cdot\mathbf{1}_{p\ge s}]\\
    =&\hat{F}_S(p)\int_p^{\infty}\left(1-\hat{F}_B(b)\right)\mathrm{d}b+(1-\hat{F}_B(p))\int_0^p \hat{F}_S(s)\mathrm{d}s\\    =&\hat{F}_S(p)\hat{G}_B(p)+\hat{H}_B(p)\int_0^p \hat{F}_S(s)\mathrm{d}s.
\end{align*}

Therefore, for the worst pair $(\hat{F}_S,\hat{F}_B)$, gain from trade is a constant for any $p\in[0,1]$. Furthermore, the following equality holds:
\begin{equation}\label{eq:FS_hat}
    \textrm{W}_1(p;\hat{F}_S,\hat{F}_B) - \hat{\beta}\cdot\textrm{OPT-W}(\hat{F}_S,\hat{F}_B)=c.
\end{equation}

Suppose that $\hat{F}_P$ is the optimal price mechanism. We have
\[\mathbb{E}_{p\sim\hat{F}_P}[\textrm{W}_1(p;\hat{F}_S,\hat{F}_B)] - \hat{\beta}\cdot\textrm{OPT-W}(\hat{F}_S,\hat{F}_B)=0.\]
It indicates that $c=0$. We also have
\begin{align*}
    &\max_{p\in[0,1]} \left\{\textrm{W}_1(p;\hat{F}_S,\hat{F}_B) - \hat{\beta}\cdot\textrm{OPT-W}(\hat{F}_S,\hat{F}_B)\right\}\\
    \ge& \inf_{F_S\in\Delta_{L^1}(\mathbb{R}_+)}\max_{p\in[0,1]} \left\{\textrm{W}_1(p;F_S,\hat{F}_B) - \hat{\beta}\cdot\textrm{OPT-W}(F_S,\hat{F}_B)\right\}\\
    =&\inf_{F_S\in\Delta_{L^1}(\mathbb{R}_+)}\sup_{F_P\in\Delta_{L^1}[0,1]} \Gamma(\hat{\beta},\hat{F}_B,F_S,F_P)\\
    \ge&\sup_{F_P\in\Delta_{L^1}[0,1]}\inf_{F_S\in\Delta_{L^1}(\mathbb{R}_+)} \Gamma(\hat{\beta},\hat{F}_B,F_S,F_P)\\
    =& \inf_{F_S\in\Delta_{L^1}(\mathbb{R}_+)} \left\{\mathbb{E}_{p\sim \hat{F}_P}[\textrm{W}_1(p;F_S,\hat{F}_B)] - \hat{\beta}\cdot\textrm{OPT-W}(F_S,\hat{F}_B)\right\}
\end{align*}

By Equation~\eqref{eq:FS_hat}, the first formula should be zero. Since $\hat{F}_P$ is the optimal price mechanism, the last formula is also zero. Consequently, we have
\begin{equation*}
    \inf_{F_S\in\Delta_{L^1}(\mathbb{R}_+)}\sup_{F_P\in\Delta_{L^1}[0,1]} \Gamma(\hat{\beta},\hat{F}_B,F_S,F_P)
    =\sup_{F_P\in\Delta_{L^1}[0,1]}\inf_{F_S\in\Delta_{L^1}(\mathbb{R}_+)} \Gamma(\hat{\beta},\hat{F}_B,F_S,F_P)\qedhere
\end{equation*}
\end{proof}

\subsection{Numerical results}\label{sec:numerical}
Now we try to approach the lower bound by the methods of dicretization and dynamic programming. We show that there exists a step function $\hat{H}_B$ that approximates the objective of $H_B^*$, where $\hat{H}_B$ weakly decreases at points over $\left\{0,\frac1{n},\ldots,\frac{n}{n}\right\}$. Then, we can search the optimal step function $\hat{H}_B$ with $\epsilon$ granularity for any $s\in\left\{0,\frac1{n},\ldots,\frac{n}{n}\right\}$ to maximize $\mathrm{obj}(\hat{H}_B)$.  Given $\beta$, we define $M:=\max \mathrm{obj}(\hat{H}_B)$. After adding the discretization error to $M$, if the result is less than $1$, the corresponding $\beta$ could be a lower bound. As for the search algorithm in discrete grids, we use dynamic programming (see Algorithm~\ref{alg:dp}). We first present the framework of the algorithm and discuss the concrete discretization error later. We denote by $Sol(x,y,z,s_k)$ the objective value of the following problem:
\begin{align*}
\max& \int^{s_k}_0\left[\beta \left(\frac{\hat{H}_B(s)}{\hat{G}_B(s)}-\frac{\int^1_s \hat{H}_B(t)^2\mathrm{d}t}{\hat{G}_B(s)^2}\right)+\left(1-\beta\right)\frac{G_B(1)}{\hat{G}_B(s)^2}\right]\mathrm{d}s\\
\text{s.t.}~&  \int_{s_k}^1 \hat{H}_B(t)^2\mathrm{d}t=x,\\
&\int_{s_k}^1 \hat{H}_B(t)\mathrm{d}t=y,\\
&\hat{H}_B(s_k)=z.
\end{align*}
Note that $s_k$ is in the set of $\left\{0,\frac1n,\ldots,1\right\}$.
Then, we show the original framework of the algorithm as follows.
\begin{algorithm}[!htbp]
\caption{Framework of dynamic programming}
\KwIn{$\beta,n,\epsilon$}
\KwOut{$M$}
\BlankLine
set $s_0=0$ and $k=0$\;
Initialize $Sol(x,y,z,s_k)$ by an $\frac{n}{\epsilon^2}\times\frac{n}{\epsilon}\times\frac1{\epsilon}$ tensor with zeros\;
\While{$s_k<1$}{
    $s_{k+1}\leftarrow s_k+\frac1n$\;
    Initialize $Sol(x,y,z,s_{k+1})$ by an $\frac{n}{\epsilon^2}\times\frac{n}{\epsilon}\times\frac1{\epsilon}$ tensor with zeros\;
    \ForEach{cell $\in Sol(x,y,z,s_k)$}{
        \For{$z'=0: z$}{
        $x'\leftarrow x-z'\cdot z'/n$\;
        $y'\leftarrow y-z'/n$\;
        \If{$x'\ge 0$ and $y'\ge 0$ and $Sol(x,y,z,s_k)+\int_{s_k}^{s_{k+1}} q_P(s)\mathrm{d}s>Sol(x',y',z',s_{k+1})$}{
            $Sol(x',y',z',s_{k+1})\leftarrow Sol(x,y,z,s_k)+\int_{s_k}^{s_{k+1}} q_P(s)\mathrm{d}s$\;
            }
        }
    }
    $k\leftarrow k+1$\;
}
$M\leftarrow \max_{cell\in Sol(x,y,z,1)} cell$\;
\label{alg:dp}
\end{algorithm}

Intuitively, the discretization error depends on both $\frac1n$ and $\epsilon$. The complexity is $O\left(\frac{n^3}{\epsilon^5}\right)$.

To reduce the time complexity, we first use the following two inequalities to eliminate some special cases directly:
\begin{align*}
(1-s)\int_s^1 H_B(t)^2 \mathrm{d}t &\ge \left(\int_s^1 H_B(t) \mathrm{d}t\right)^2,\\
H_B(s)\int_s^1 H_B(t) \mathrm{d}t &\ge \int_s^1 H_B(t)^2 \mathrm{d}t.
\end{align*}

Next, we construct a new function $K(\cdot)$ with $\frac{\epsilon}n$ granularity to approximate $\int_{s_k}^1 \hat{H}_B(t)^2\mathrm{d}t$. The time complexity can be reduced by $\frac1{\epsilon}$ times. We construct functions on grid points as follows:
\begin{align*}
\hat{H}_B\left(\frac{i}n\right)&=\left\lfloor H_B^*\left(\frac{i}n\right)\cdot\frac1{\epsilon}\right\rfloor\cdot\epsilon;\\
K\left(\frac{i}n\right)&=\sum_{j=i+1}^n \left\lfloor H_B^*\left(\frac{j}n\right)^2\cdot\frac1{\epsilon}\right\rfloor\cdot\frac{\epsilon}n;\\
\hat{G}_B\left(\frac{i}n\right)&=\sum_{j=i+1}^n \hat{H}_B\left(\frac{j}n\right)\cdot\frac1n+G_B(1).
\end{align*}
For $s\in\left(\frac{i}n, \frac{i+1}n\right)$, we use the following continuous extensions:
\begin{align*}
\hat{H}_B(s)&=\hat{H}_B\left(\frac{i+1}n\right),\\
K(s)&=K\left(\frac{i+1}n\right)+\left(\frac{i+1}n-s\right)\cdot\hat{H}_B\left(\frac{i+1}n\right)^2,\\
\hat{G}_B(s)&=\hat{G}_B\left(\frac{i}n\right)+\left(\frac{i+1}n-s\right)\cdot\hat{H}_B\left(\frac{i+1}n\right).
\end{align*}

Then, we define $\mathrm{obj}(\hat{H}_B,K)$ as:
$$\mathrm{obj}(\hat{H}_B,K)=\int^1_0\left[\beta \left(\frac{\hat{H}_B(s)}{\hat{G}_B(s)}-\frac{K(s)}{\hat{G}_B(s)^2}\right)+\left(1-\beta\right)\frac{G_B(1)}{\hat{G}_B(s)^2}\right]\mathrm{d}s.$$

We use $\mathrm{obj}(\hat{H}_B,K)$ to approximate $\mathrm{obj}(H_B^*)$. Next, we analyze the discretization error. 
\begin{theorem}
For the discretization error, we have $\mathrm{obj}(H_B^*)-\mathrm{obj}(\hat{H}_B,K)\le \beta\ln \frac{1-\beta}{1-\beta\left(1+\epsilon+\frac1n\right)}$.
\label{thm:step}
\end{theorem}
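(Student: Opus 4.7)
My plan is to compare $\mathrm{obj}(H_B^*)$ with $\mathrm{obj}(\hat H_B, K)$ for the specific discretization $(\hat H_B, K)$ of $H_B^*$ prescribed in Section~\ref{sec:numerical}. Since Algorithm~\ref{alg:dp} computes $M$ as the maximum of $\mathrm{obj}(\hat H_B, K)$ over all admissible step functions, any upper bound on $\mathrm{obj}(H_B^*) - \mathrm{obj}(\hat H_B, K)$ for this construction also bounds $\mathrm{obj}(H_B^*) - M$, which is precisely what the theorem controls.

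\textbf{Step 1 (pointwise estimates).} I would first show $0 \le \hat H_B(s) \le H_B^*(s)$ on $[0,1]$, using the floor rounding $\hat H_B(i/n) = \lfloor H_B^*(i/n)/\epsilon\rfloor\,\epsilon$ and the right-endpoint step extension together with monotonicity of $H_B^*$. The $\epsilon$ from value rounding and the $1/n$ from the Riemann slack of a monotone decreasing integrand at grid width $1/n$ then yield $\int_s^1 [H_B^*(t) - \hat H_B(t)]\,dt \le \epsilon + 1/n$ uniformly, and hence $0 \le G_B^*(s) - \hat G_B(s) \le \epsilon + 1/n$. An analogous argument, using $H_B^{*2} \le H_B^*$ (since $H_B^* \in [0,1]$), gives $0 \le \int_s^1 H_B^*(t)^2\,dt - K(s) \le \epsilon + 1/n$.

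\textbf{Step 2 (error decomposition).} Expand
\[
\mathrm{obj}(H_B^*) - \mathrm{obj}(\hat H_B, K) = \int_0^1 \bigl[q_P^*(s) - \hat q_P(s)\bigr]\,ds
\]
into three summands. The summand arising from $(1-\beta) G_B(1)/G_B(\cdot)^2$ contributes non-positively, since $\hat G_B \le G_B^*$ forces $1/\hat G_B^2 \ge 1/G_B^{*2}$, so this term only helps. The remaining two summands, using the bounds from Step 1, combine into a single integrand dominated by
\[
\beta(\epsilon + 1/n) \cdot \frac{H_B^*(s)}{G_B^*(s)\bigl(G_B^*(s) - (\epsilon + 1/n)\bigr)}.
\]

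\textbf{Step 3 (logarithmic evaluation).} The substitution $u = G_B^*(s)$ (so $du = -H_B^*(s)\,ds$) maps $s \in [0,1]$ to $u \in [G_B(1), G_B^*(0)]$ with $G_B(1) = (1-\beta)/\beta$, and transforms the integral to
\[
\beta(\epsilon + 1/n) \int_{(1-\beta)/\beta}^{G_B^*(0)} \frac{du}{u\bigl(u - (\epsilon + 1/n)\bigr)}.
\]
The partial fraction identity $\frac{1}{u(u-a)} = \frac{1}{a}\bigl[\tfrac{1}{u-a} - \tfrac{1}{u}\bigr]$ evaluates this to $\beta\bigl[\ln(u - a) - \ln u\bigr]_{(1-\beta)/\beta}^{G_B^*(0)}$. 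The crude bound $G_B^*(0) = \int_0^1 H_B^*(t)\,dt + G_B(1) \le 1 + (1-\beta)/\beta = 1/\beta$ (valid since $H_B^* \le 1$), combined with the monotonicity of $u \mapsto \ln(u - a) - \ln u$ for $u > a$, lets me replace the upper limit by $1/\beta$. A direct calculation then yields exactly $\beta \ln \frac{1-\beta}{1 - \beta(1 + \epsilon + 1/n)}$, as claimed.

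The crux is the aggregation in Step 2: the $K$-versus-$\int_s^1 H_B^*(t)^2\,dt$ comparison introduces both a positive perturbation (from $\hat G_B^2 \le G_B^{*2}$ in a denominator) and a negative perturbation (from $K \le \int_s^1 H_B^*(t)^2\,dt$ in a numerator), and these must be combined with the first summand $\beta[H_B^*/G_B^* - \hat H_B/\hat G_B]$ to produce the clean integrand above. Verifying this algebraic cancellation, rather than merely bounding each piece crudely, is the main technical obstacle and is what forces the denominator $G_B^*(s) - (\epsilon + 1/n)$ rather than something weaker such as $G_B^*(s) - O(\epsilon + 1/n)$ with a larger constant.
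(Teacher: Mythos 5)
Your overall architecture coincides with the paper's: the same three-term decomposition of $\mathrm{obj}(H_B^*)-\mathrm{obj}(\hat H_B,K)$, the observation that the $(1-\beta)G_B(1)/G_B(\cdot)^2$ term is non-positive because $\hat G_B\le G_B^*$, the key estimates $0\le G_B^*(s)-\hat G_B(s)\le \epsilon+\tfrac1n$ and $K(s)\le\int_s^1 H_B^*(t)^2\mathrm{d}t$, and a final logarithmic evaluation. Your Step 3 is sound as arithmetic: bounding the upper-limit contribution $\ln\bigl(1-(\epsilon+\tfrac1n)/G_B^*(0)\bigr)$ by $0$ gives exactly $\beta\ln\frac{1-\beta}{1-\beta(1+\epsilon+1/n)}$, and truncating at $1/\beta$ as you suggest gives something slightly stronger.

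The gap is precisely where you flag it: the Step 2 claim that the first two summands are \emph{pointwise} dominated by $\beta(\epsilon+\tfrac1n)\,H_B^*(s)/\bigl(G_B^*(s)(G_B^*(s)-\epsilon-\tfrac1n)\bigr)$ is unproven, and as a pointwise statement it is false. Writing $\frac{H_B^*}{G_B^*}-\frac{\hat H_B}{\hat G_B}=\frac{H_B^*-\hat H_B}{\hat G_B}-\frac{H_B^*(G_B^*-\hat G_B)}{G_B^*\hat G_B}$, the first fraction is governed by $H_B^*(s)-\hat H_B(s)$, which is \emph{not} $O(\epsilon+\tfrac1n)$ pointwise: $\hat H_B(s)$ is the rounded value of $H_B^*$ at the right endpoint of the grid cell containing $s$, so wherever $H_B^*$ drops by a constant inside a single cell, $H_B^*(s)-\hat H_B(s)$ is of constant order on a set of measure up to $1/n$; only the integral of this difference is small, by telescoping. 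There is also a factor problem in the $K$ term: $1/\hat G_B^2-1/G_B^{*2}\le 2(\epsilon+\tfrac1n)/(\hat G_B^2 G_B^*)$, so even in the smooth case this summand alone is roughly $2(\epsilon+\tfrac1n)H_B^*/G_B^{*2}$, already twice your claimed combined integrand. The paper avoids both issues by never bounding $\gamma_1$ pointwise: since $H/G=(-\ln G)'$, it evaluates $\gamma_1$ exactly as $\beta\ln\bigl(G_B^*(0)/\hat G_B(0)\bigr)$, bounds $\gamma_2$ separately via $\int_s^1 H_B^*(t)^2\mathrm{d}t\le H_B^*(s)\int_s^1 H_B^*(t)\mathrm{d}t$ and the substitution $x=\int_s^1 H_B^*(t)\mathrm{d}t$, and only then combines the two \emph{integrated} expressions into $\beta\ln\bigl(G_B(1)/a\bigr)$. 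To complete your argument you would need to carry out that integral-level cancellation (or prove an integrated, not pointwise, domination); the pointwise route cannot be repaired.
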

\begin{proof}
First, we have
\begin{align*}
    & \mathrm{obj}(H_B^*)-\mathrm{obj}(\hat{H}_B,K)\\
    =& \underbrace{\beta\int_0^1\left(\frac{H_B^*(s)}{G_B^*(s)}-\frac{\hat{H}_B(s)}{\hat{G}_B(s)}\right)\mathrm{d}s}_{\gamma_1}+\underbrace{\beta\int_0^1 \left(\frac{K(s)}{\hat{G}_B(s)^2}-\frac{\int^{1}_{s}H_B^*(t)^2\mathrm{d}t}{G_B^*(s)^2}\right)\mathrm{d}s}_{\gamma_2}\\
    &+\underbrace{(1-\beta)G_B(1)\int_0^1 \left(\frac1{G_B^*(s)^2}-\frac1{\hat{G}_B(s)^2}\right)\mathrm{d}s}_{\gamma_3}.
\end{align*}

Now we will upper bound three parts $\gamma_1,\gamma_2$ and $\gamma_3$ in above equation respectively.

Note that functions $H_B^*$ and $\hat{H}_B$ are both decreasing functions. Therefore, we have
\begin{align*}
    &G_B^*(s)-\hat{G}_B(s)\\
    \le& H_B^*(s)\cdot\left(\frac{i+1}n-s\right)+\sum_{j=i+1}^{n-1} H_B^*\left(\frac{j}n\right)\cdot\frac1n-\hat{H}_B\left(\frac{i+1}n\right)\cdot\left(\frac{i+1}n-s\right)-\sum_{j=i+2}^n \hat{H}_B\left(\frac{j}n\right)\cdot\frac1n\\
    \le&\left(H_B^*(s)-\left(H_B^*\left(\frac{i+1}n\right)-\epsilon\right)\right)\cdot\left(\frac{i+1}n-s\right)+\sum_{j=i+1}^{n-1}\left(H_B^*\left(\frac{j}n\right)-\left(H_B^*\left(\frac{j}n\right)-\epsilon\right)\right)\cdot\frac1n\\
    \le&\frac1n+\epsilon.
\end{align*}

For any $s\in\left(\frac{i}n,\frac{i+1}n\right)$, we have $\hat{H}_B(s)\le H_B^*(s)$. That is, we always have $\hat{G}_B(s)\le G_B^*(s)$ and $K(s)\le \int^1_s H_B^*(t)^2\mathrm{d}t$. Therefore, we get $\gamma_3 \le 0$ directly.

We get that $\gamma_2$ is upper bounded by
\begin{align*}
    &\beta\int_0^1 \left(\int^1_s H_B^*(t)^2\mathrm{d}t\right)\cdot\left({\hat{G}_B(s)^{-2}}-{G_B^*(s)^{-2}}\right)\mathrm{d}s\\
    \le&\beta\int_0^1 \left(H_B^*(s)\int^1_s H_B^*(t)\mathrm{d}t\right)\cdot\left({\left(\int^1_s H_B^*(t)\mathrm{d}t+G_B(1)-\epsilon-\frac1n\right)^{-2}}-{\left(\int^1_s H_B^*(t)\mathrm{d}t+G_B(1)\right)^{-2}}\right)\mathrm{d}s.
\end{align*}
Let $x:=\int^1_s H_B^*(t)\mathrm{d}t$ and $a:=G_B(1)-\epsilon-\frac1n$. We get that $\gamma_2$ is upper bounded by
\begin{align*}
    &\beta\int_0^{G_B^*(0)-G_B(1)} \left(\frac{x}{(a+x)^2}-\frac{x}{(G_B(1)+x)^2}\right)\mathrm{d}x\\
    =&\beta\left(\ln(a+x)+\frac{a}{a+x}-\ln\left(G_B(1)+x\right)-\frac{G_B(1)}{G_B(1)+x}\right)\bigg|_0^{G_B^*(0)-G_B(1)}.
\end{align*}

For $\gamma_1$, we have
\begin{align*}
    \gamma_1=&\beta\left(\ln\hat{G}_B(s)\bigg|_0^1-\ln G_B^*(s)\bigg|_0^1\right)\\
    =&\beta\left(\ln G_B^*(0)-\ln \hat{G}_B(0)\right)\\
    \le&\beta\left(\ln G_B^*(0)-\ln \left(G_B^*(0)-\frac1n-\epsilon\right)\right).
\end{align*}
Hence, the part $\gamma_1+\gamma_2$ is upper bounded by
$$\beta\left(\ln G_B(1)-\ln a+\frac{a}{a+G_B^*(0)-G_B(1)}-\frac{G_B(1)}{G_B^*(0)}\right)\le\beta\ln \frac{G_B(1)}{a}.$$
To summarize, the discretization error is $\beta\ln \frac{1-\beta}{1-\beta\left(1+\epsilon+\frac1n\right)}$.\qedhere
\end{proof}

Consequently, we update $\tilde{Sol}(x,y,z,s_k)$ as follows:
\begin{align*}
\max &\int^{s_k}_0\left[\beta \left(\frac{\hat{H}_B(s)}{\hat{G}_B(s)}-\frac{K(s)}{\hat{G}_B(s)^2}\right)+\left(1-\beta\right)\frac{G_B(1)}{\hat{G}_B(s)^2}\right]\mathrm{d}s\\
\text{s.t.}~&
K(s_k)=x,\\
&\int_{s_k}^1 \hat{H}_B(t)\mathrm{d}t=y,\\
&\hat{H}_B(s_k)=z.
\end{align*}

Therefore, we can run our algorithm with $O\left(\frac{n^3}{\epsilon^4}\right)$ complexity. Note that $x'\leftarrow x-\left\lfloor \frac{z'\cdot z'}{\epsilon}\cdot\frac{\epsilon}n\right\rfloor$ in every step. We present the calculation of $\int_{s_k}^{s_{k+1}} q_P(s)\mathrm{d}s$:
\begin{align*}
&\int_{s_k}^{s_{k+1}}\frac{\beta\left(\hat{H}_B(s_{k+1})\hat{G}_B(s_{k+1})-K(s_{k+1})\right)+(1-\beta)G_B(1)}{(\hat{G}_B(s_{k+1})+(s_{k+1}-s)\hat{H}_B(s_{k+1}))^2}\mathrm{d}s\\
    =&\frac{\beta\left(\hat{H}_B(s_{k+1})\hat{G}_B(s_{k+1})-K(s_{k+1})\right)+(1-\beta)G_B(1)}{\hat{G}_B(s_{k+1})(\hat{G}_B(s_{k+1})+(s_{k+1}-s_k)\hat{H}_B(s_{k+1}))}\cdot(s_{k+1}-s_k).
\end{align*}

Finally, we are fully prepared to solve the lower bound $\beta$.
\begin{theorem}
Given distribution $F_B$, for  randomized fixed-price mechanism that uses only the knowledge of the buyer distribution, the designer can achieve at least $0.71$ of the optimal welfare for any seller distribution $F_S$. Given distributions $F_S$ and $F_B$, the designer can always select a price that achieves at least $0.71$ of the optimal welfare.
\end{theorem}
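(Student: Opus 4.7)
The plan is to apply the machinery developed in Sections~\ref{sec:proper}--\ref{sec:numerical} at the specific value $\beta = 0.71$ and complete a numerical verification. By Theorem~\ref{thm:approx} together with the scale-free normalization $s_2 = 1$ (forcing $G_B(1) = (1-\beta)/\beta = 29/71$), the first claim of the theorem — that $0.71$ is a lower bound for the randomized buyer-only mechanism — is equivalent to $\mathrm{obj}(H_B)\le 1$ for every weakly decreasing $H_B\colon[0,1]\to[0,1]$. By Theorem~\ref{thm:same beta}, establishing this simultaneously yields the second claim concerning the deterministic full-knowledge mechanism, since the two approximation ratios coincide. Thus the entire proof reduces to a single universal inequality on the functional $\mathrm{obj}(\cdot)$.

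To discharge that inequality, I would run the dynamic program in its reduced form $\tilde{Sol}(x,y,z,s_k)$ with small granularities such as $n = 10^3$ and $\epsilon = 10^{-3}$. Each transition enumerates a weakly smaller next value $z'\le z$ (monotonicity of $H_B$), updates $x'\leftarrow x - \lfloor(z')^2/\epsilon\rfloor\,\epsilon/n$ and $y'\leftarrow y - z'/n$, and increments the running integral by the closed-form expression for $\int_{s_k}^{s_{k+1}} q_P(s)\,\mathrm{d}s$ displayed at the end of Section~\ref{sec:numerical}. The two preliminary inequalities $(1-s)K(s)\ge\bigl(\int_s^1 \hat{H}_B(t)\,\mathrm{d}t\bigr)^2$ and $\hat{H}_B(s)\int_s^1 \hat{H}_B(t)\,\mathrm{d}t\ge K(s)$ prune inadmissible cells, keeping the overall complexity at $O(n^3/\epsilon^4)$. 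Let $M$ denote the maximum cell value of $\tilde{Sol}$ at $s_k = 1$. By Theorem~\ref{thm:step},
$$\sup_{H_B}\mathrm{obj}(H_B)\;\le\;M \;+\; \beta\ln\frac{1-\beta}{1-\beta(1+\epsilon+1/n)},$$
and for $\beta = 0.71$ with $\epsilon = 1/n = 10^{-3}$ the additive error evaluates to well under $10^{-2}$. Hence, provided the DP returns $M\le 1 - 10^{-2}$, the desired bound $\mathrm{obj}(H_B)\le 1$ follows uniformly in $H_B$, which gives $\lim_{p\to\infty}Q_P^*(p)\le 1$ by Proposition~\ref{prop:opt} and the construction of $Q_P^*$; the required price distribution $F_P$ is then obtained by renormalizing $Q_P^*$ as described immediately after Problem~\eqref{mec:opt}. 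A final appeal to Theorem~\ref{thm:same beta} transfers the $0.71$ guarantee to the deterministic mechanism that uses both $F_B$ and $F_S$.

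The main obstacle is purely computational: driving the discretization error of Theorem~\ref{thm:step} small enough to leave slack against $1$ forces $(n,\epsilon)$ to be small, but the naive DP has state tensor of size $n/\epsilon^2 \times n/\epsilon \times 1/\epsilon$ and transition cost $1/\epsilon$, giving $O(n^3/\epsilon^5)$, which at the relevant granularities is infeasible. The $K$-approximation of $\int_{s_k}^1 \hat{H}_B(t)^2\,\mathrm{d}t$ (saving one factor of $\epsilon^{-1}$) together with the Cauchy--Schwarz and monotonicity pruning above are what make the verification tractable. The final quantitative check — that the gap between the computed $M$ and the true supremum, which is known to be at most $0.7381$ by the impossibility example, is comfortably larger than the error bound of Theorem~\ref{thm:step} — is what consumes the slack between $0.71$ and $0.7381$; numerically this margin is ample.
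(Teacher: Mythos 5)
Your proposal is essentially the paper's own proof: reduce to the universal inequality $\mathrm{obj}(H_B)\le 1$ via Theorem~\ref{thm:approx}, certify it by running the reduced dynamic program and adding the discretization error of Theorem~\ref{thm:step}, and transfer the guarantee to the full-knowledge setting via Theorem~\ref{thm:same beta}. The only substantive difference is parametric: the paper uses $n=75$, $\epsilon=1/75$ (obtaining $M=0.9440$ and error $0.0479$, hence $\mathrm{obj}(H_B^*)\le 0.9919<1$), whereas your proposed $n=\epsilon^{-1}=10^{3}$ would make the $O(n^{3}/\epsilon^{4})$ computation infeasible in practice --- and is unnecessary, since the coarser grid already leaves enough slack against $1$.
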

\begin{proof}
We implement the dynamic programming algorithm by Python 3.9, with parameters $\beta=0.71$, $n=75$ and $\epsilon=\frac1{75}$. We run our algorithm with a computing cloud node whose resource contains the Intel Gold 5218 and 128GB memory. Our program takes 552794s (about 6d). Finally, we get $M=0.9440$. Besides, the discretization error is $0.0479$ by Theorem~\ref{thm:step}. Thus, we have that $\mathrm{obj}(H_B^*)\le 0.9919<1$. That is, we find a lower bound $\beta=0.71$. All codes can be found at our \href{https://github.com/ruc-renzy/Improved-Approximation-Ratios-of-Fixed-Price-Mechanisms-in-Bilateral-Trades}{github repository}.
\end{proof}
We also summarize lower bounds for different discretization factors in Table~\ref{tab:lb}.

\begin{table}[!htbp]
\centering
\begin{tabular}{|c|c|c|c|c|c|c|}\hline
     $\beta$& $n$ & $\epsilon$ & $M$ & error & $\mathrm{obj}(H_B^*)$ & running time \\ \hline
     $0.69$& $35$ & $\frac1{35}$ & $0.9056$ & $0.0939$ & $0.9995$ & $4166$s ($\approx 1$h) \\\hline
     $0.7$& $50$ & $\frac1{50}$ & $0.9245$ & $0.0686$ & $0.9931$ & $42708$s ($\approx 12$h) \\\hline
      $0.71$& $75$ & $\frac1{75}$ & $0.9440$ & $0.0479$ & $0.9919$ & $552794$s ($\approx 6$d) \\\hline
\end{tabular}
\caption{A summary of different factors}
\label{tab:lb}
\end{table}
In the end, we choose $\beta=0.7381$ and construct an instance function $H_B$. We show that $\int^{1}_0 q_P^*(s)\mathrm{d}s>1$. Consequently, we get an upper bound.
\begin{theorem}
\label{thm:upperbound}
Given distribution $F_B$, the randomized fixed-price mechanism that uses only the knowledge of the buyer distribution has approximation ratio at most $0.7381$. Given distributions $F_S$ and $F_B$, no fixed-price mechanism has
an approximation ratio better than $0.7381$.
\end{theorem}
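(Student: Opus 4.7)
The plan is to prove the impossibility by exhibiting a concrete buyer distribution $F_B$ for which the best randomized fixed-price mechanism (using only $F_B$) cannot reach ratio $0.7381$, and then lift the example to the full-knowledge setting via Theorem~\ref{thm:same beta}. By the reformulation developed earlier, $\beta$ is a valid lower bound on the approximation ratio if and only if $\lim_{p\to\infty} Q_P^*(p)\le 1$ for the canonical solution $Q_P^*$ from Theorem~\ref{thm:properties}. So it suffices to find one $F_B$ such that, at $\beta = 0.7381$, the corresponding $\int_0^{s_2} q_P^*(s)\,\mathrm{d}s$ is strictly greater than $1$.

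First I would fix $\beta = 0.7381$ and normalise the instance by setting $s_2 = 1$, which pins down $G_B(1) = (1-\beta)/\beta$. I would then choose an explicit $H_B:[0,1]\to[0,1]$ that is weakly decreasing with the correct tail integral $G_B(1)$. The most natural candidates are either a piecewise-constant (finitely-many-atoms) ansatz, or a smooth solution of the Euler--Lagrange equation
\[
{H_B^*}'(z)=\frac{\int_z^{1} H_B^*(t)^2\,\mathrm{d}t - H_B^*(z)G_B^*(z) - G_B(1)^2}{G_B^*(z)^3\int_0^z \frac{1}{G_B^*(t)^2}\,\mathrm{d}t}
\]
truncated at some reasonable resolution. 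With $H_B$ in hand, $G_B$ and $\int_s^1 H_B(t)^2\,\mathrm{d}t$ are easily computable, and one substitutes into
\[
q_P^*(s)=\beta\left(\frac{H_B(s)}{G_B(s)}-\frac{\int_s^{1} H_B(t)^2\,\mathrm{d}t}{G_B(s)^2}\right)+(1-\beta)\,\frac{G_B(1)}{G_B(s)^2}
\]
to evaluate $\int_0^1 q_P^*(s)\,\mathrm{d}s$. Since $Q_P^*$ is the pointwise-optimal feasible monotone function from Problem~\eqref{mec:opt}, a strict inequality $>1$ rules out any cumulative distribution $F_P$ satisfying Inequality~\eqref{formula:coef}, which in turn means no randomized fixed-price mechanism using $F_B$ alone attains ratio $0.7381$.

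For the two-sided half of the theorem, I would take the same $F_B$ and the seller distribution $\hat{F}_S$ given explicitly by Theorem~\ref{thm:same beta},
\[
\hat{F}_S(p)=G_B(1)\left[-H_B(p)\int_0^p\frac{1}{G_B(s)^2}\,\mathrm{d}s+\frac{1}{G_B(p)}\right],
\]
under which $W_1(p;\hat{F}_S,F_B)-\beta\cdot\mathrm{OPT\text{-}W}(\hat{F}_S,F_B)$ is constant in $p$ on $[0,1]$. By the equality of minimax and maximin values established there, the buyer-side impossibility automatically yields $\sup_p W(p;\hat{F}_S,F_B)<0.7381\cdot\mathrm{OPT\text{-}W}(\hat{F}_S,F_B)$, which is exactly the full-knowledge upper bound.

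The main obstacle is producing an $H_B$ whose resulting integral is \emph{provably}, not just numerically, above $1$ at $\beta=0.7381$; the gap above $1$ must absorb any approximation error incurred in discretising $H_B$ and in computing the nested integrals $G_B$ and $\int_s^1 H_B^2$. A finitely supported step-function witness keeps all quantities rational and the verification elementary, while a smoother candidate from the Euler--Lagrange ODE gives a larger safety margin but demands rigorous interval arithmetic. Either way, the computation itself is routine once the witness is chosen; the real work is selecting a witness whose margin above $1$ is robust to the unavoidable discretisation slack.
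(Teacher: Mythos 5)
Your proposal follows essentially the same route as the paper: fix $\beta=0.7381$, normalise $s_2=1$ so $G_B(1)=(1-\beta)/\beta$, exhibit an explicit weakly decreasing witness $H_B$, verify $\int_0^1 q_P^*(s)\,\mathrm{d}s>1$, and transfer the impossibility to the full-knowledge setting via the worst-case seller distribution of Theorem~\ref{thm:same beta}. The paper instantiates the witness as a piecewise function ($1$ on $[0,0.25]$, an exponential ramp on $(0.25,0.6)$, $0$ on $[0.6,1]$) and evaluates the integral numerically to $\approx 1.00012$, so the razor-thin margin you worry about is indeed present in the paper's own verification as well.
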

\begin{proof}
Let $\beta:=0.7381$ and consider the following function on $[0,1]$:
 \begin{equation*}
    H_B(s)=
    \begin{cases}
    1,& s\le0.25\\
    1-\lambda\left(1-e^{1.5-6s}\right),& 0.25< s< 0.6\\
    0,& 0.6\le s\le 1\\
    \end{cases}
\end{equation*}
where $\lambda=1/(1-e^{-2.1})$. Besides, the function $H_B$ satisfies $G_B(1)=\frac{1-\beta}{\beta}$. Then, we calculate the integral $\int^{1}_0 q_P^*(s)\mathrm{d}s$, that is
$$\int^{1}_0\beta \left[\left(\frac{H_B(s)}{G_B(s)}-\frac{\int^{1}_s H_B(t)^2\mathrm{d}t}{G_B(s)^2}\right)+(1-\beta)\frac{G_B(1)}{G_B(s)^2}\right]\mathrm{d}s.$$
We divide the integral into three parts: $\int^1_{0.6} q_P^*(s)\mathrm{d}s$, $\int^{0.6}_{0.25} q_P^*(s)\mathrm{d}s$ and $\int^{0.25}_0 q_P^*(s)\mathrm{d}s$.

For the first part, we have
\begin{align*}
    \int^1_{0.6} q_P^*(s)\mathrm{d}s = \int^1_{0.6} (1-\beta)\frac1{G_B(1)}\mathrm{d}s = 0.4\beta=0.29524.
\end{align*}

For the second part, it is difficult to compute accurately. So we use a program to approximate it and have $\int^{0.6}_{0.25} Q_P^*(s)\mathrm{d}s\approx0.41766$.

For the third part, the integral $\int^{0.25}_0 q_P^*(s)\mathrm{d}s$ is
\begin{align*}
     & \int^{0.25}_0 \frac{\beta\left(G_B(0.25)-\int^{1}_{0.25} H_B(t)^2\mathrm{d}t\right)+(1-\beta)G_B(1)}{\left(0.25-s+G_B(0.25)\right)^2}\mathrm{d}s\\
    =&\frac{A}{0.25-s+G_B(0.25)}\bigg|_0^{0.25}\\
    =&\frac{0.25A}{0.25G_B(0.25)+G_B(0.25)^2}.
\end{align*}
where $A=\beta\left(G_B(0.25)-\int^{1}_{0.25} H_B(t)^2\mathrm{d}t\right)+(1-\beta)G_B(1)$. For each term, we compute it precisely and have $\int^{0.25}_0 q_P^*(s)\mathrm{d}s\approx 0.28722$.

Finally, we get $\int^{1}_0 q_P^*(s)\mathrm{d}s\approx1.00012>1$. That is, given the function $H_B$ above, no matter how we design the price mechanism, there always exists a seller distribution $F_S$ such that the welfare ratio is less than $0.7381$. Please check numerical results at our github repository.
\end{proof}

\section{Single-sample Approximation}
We consider the fixed-price mechanism that receives a single sample from the seller distribution as sole prior information in this section. Previous works~\cite{Dutting:2021wx,Kang:2022wo} study  deterministic mechanisms.  We extend their bounds to randomized mechanisms. For asymmetric case, \cite{Dutting:2021wx} show that every deterministic IC, IR and BB mechanism has approximation ratio at most $1/2$. Here we show that the randomization would not help. In other words, given any single sample from the seller distribution, the designer is allowed to use a randomization over different fixed-price mechanisms. We demonstrate that a randomized fixed-price mechanism has  approximation ratio at most $1/2$. \cite{Dutting:2021wx} also prove that there exists a deterministic fixed-price mechanism of which the lower bound is $1/2$. Therefore, for randomized fixed-price mechanisms, approximation ratio $1/2$ is also tight.

\begin{theorem}
Every randomized fixed-price mechanism that receives a single sample
from the seller’s distribution as information has approximation ratio at most $1/2$.
\label{thm:asy_sample_up}
\end{theorem}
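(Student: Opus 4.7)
The plan is to show that for every randomized single-sample mechanism $\mathcal{M}$ there exists an instance $(F_B, F_S)$ on which $\mathrm{W}(\mathcal{M}, F_B, F_S) \leq \tfrac{1}{2}\,\mathrm{OPT\text{-}W}(F_B, F_S)$.

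First I would reduce the mechanism to a Markov kernel $D$ that sends each seller sample $s_0$ to a distribution over prices. Because the true seller value $s$ and the observed sample $s_0$ are independent draws from $F_S$ (and the buyer value $B$ is independent of both), the welfare on any instance depends only on the marginal price distribution $F_P := \int D(s_0)\,\mathrm{d}F_S(s_0)$ together with $F_B$ and $F_S$. In particular, taking $F_S = \delta_v$ reduces the mechanism to randomizing prices according to $D(v)$, independently of the kernel's behavior at other samples.

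The heart of the argument is a case analysis on the kernel. If for some $v > 0$ one has $\Pr_{p \sim D(v)}[p \geq v] \leq \tfrac{1}{2}$, then the instance $(F_S, F_B) = (\delta_v, \delta_b)$ with $b \to \infty$ yields welfare-to-optimum ratio approaching $\Pr_{D(v)}[p \geq v] \leq \tfrac{1}{2}$. If instead the support of $D(v)$ contains some price $p^* \geq 2v$ with positive probability, an intermediate instance $(\delta_v, \delta_b)$ with $b$ just below $p^*$ forces no trade and gives $\mathrm{W}/\mathrm{OPT\text{-}W} = v/b \to v/p^* \leq \tfrac{1}{2}$. The remaining regime is the most delicate: kernels that consistently concentrate $D(v)$ in the narrow band $[v, 2v)$ for every $v$. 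For these, I would adapt the D\"utting-style construction by combining a continuous seller distribution (such as $U[0,1]$) with a buyer point mass $F_B = \delta_b$ and sweeping $b$; the resulting welfare decomposition $\tfrac{1}{2} + \int_0^{b\wedge 1} (bp - p^2/2)\,\mathrm{d}F_P(p) + (b - \tfrac{1}{2})\,F_P((1, b])$ against $\mathrm{OPT\text{-}W} = \max(b, \tfrac{1}{2} + b^2/2)$ is driven to $\tfrac{1}{2}$ in the large-$b$ limit.

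The main obstacle is making the case analysis exhaustive across kernels — the ``hedging'' regime above requires controlling the induced marginal $F_P$ uniformly. A clean route is to appeal to a Yao-type minimax: since for any fixed instance $I$ the welfare is linear in the mechanism, a distribution $\mu$ over instances for which every deterministic pricing function $f$ satisfies $\mathbb{E}_\mu[\mathrm{W}(f, I)/\mathrm{OPT\text{-}W}(I)] \leq \tfrac{1}{2}$ transfers the bound to every randomized $\mathcal{M}$ by linearity and hence yields an instance in $\mathrm{supp}(\mu)$ on which the ratio is at most $\tfrac{1}{2}$. Constructing $\mu$ as a mixture of the adversarial families described above, and verifying the expected-ratio bound via the characterizations in the preceding step, completes the proof.
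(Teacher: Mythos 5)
Your reduction to the marginal price distribution and your Case A are fine, but the case analysis breaks down in Cases B and C, and the concluding Yao step does not supply the missing construction. In Case B, a single price $p^*\ge 2v$ occurring with positive probability does not ``force no trade'': for the kernel $D(v)=0.99\,\delta_v+0.01\,\delta_{3v}$ and $b$ just below $3v$, trade occurs with probability $0.99$ and the ratio is close to $1$, not $v/b$. In Case C, the $U[0,1]$ seller with a large point-mass buyer gives asymptotic ratio $\mathbb{E}[\min(p,1)]$ with $p\sim F_P$; against the band kernel $D(v)=\delta_{2v}$ this equals $\mathbb{E}[\min(2s_0,1)]=3/4$, not $1/2$. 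So the ``hedging'' regime --- which is exactly the hard regime --- is not handled, and the appeal to Yao's principle merely restates the problem, since the hard distribution $\mu$ is never constructed.

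The idea you are missing is the paper's mechanism-adapted escalating sequence. Given the kernel and $\epsilon>0$, choose $s_1<s_2<\cdots<s_n$ greedily so that $s_{k+1}$ exceeds the $(1-\epsilon)$-quantile of $D(s_k)$, i.e.\ $\Pr[D(s_k)<s_{k+1}]\ge 1-\epsilon$; let the seller be uniform on $\{s_1,\dots,s_n\}$ and the buyer a huge constant such as $s_n\cdot 2^n$. Because the sample $s'$ and the true value $s$ are independent and uniform on the sequence, whenever $s'=s_j$ and $s=s_i$ with $j<i$ the posted price falls below $s_{j+1}\le s_i$ with probability at least $1-\epsilon$ and the seller rejects; summing over the $i-1$ choices of $j$ and averaging over $i$ gives rejection probability at least $\frac{n-1}{2n}(1-\epsilon)\to 1/2$, and since both the optimum and the realized welfare are dominated by the buyer's enormous value, the approximation ratio tends to the trade probability, hence to $1/2$. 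This single construction defeats every kernel at once --- in particular the band-pricing kernels that your Cases B and C fail to kill --- because the sequence grows faster than any fixed quantile of the prices the mechanism can post.
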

The idea is that we fix the buyer's value to be quite high. At the same time, we construct the seller distribution on many small discrete values. Under such construction, the approximation ratio is exactly the probability that the trade occurs. That is, for any randomized mechanism, we show that there exists an instance that the seller rejects the trade with probability $0.5$.
\begin{proof}
For a randomized fixed-price mechanism, we denote it as a function $F_P: \mathbb{R}\rightarrow \Delta_{L^1}(\mathbb{R})$ that maps a value to a distribution. Then, we construct a set $\{s_1, s_2,\ldots,s_n\}$ that any two adjacent values satisfy $\Pr\left[F_P(s_k)< s_{k+1}\right]=1-\epsilon$ and $s_k\le s_{k+1}$ for $k\in [n-1]$. Consider a seller whose valuation is drawn uniformly from the set and a buyer whose valuation equals $s_n\cdot 2^n$. Let $s'$ denote the seller's sample and $s$ denote the seller's valuation. Then the probability that the seller rejects the trade is:
\begin{align*}
    \Pr_{s',s}\left[F_P(s')< s\right] &= \sum_{i=1}^n \Pr\left[F_P(s)<
                                      s_i\right]\cdot \Pr \left[ s=s_i \right] \\
    &\geq \sum_{i=2}^n \frac{i-1}{n}\cdot(1-\epsilon)\cdot\frac1{n}\\
    &=\frac{n-1}{2n}\cdot(1-\epsilon).
\end{align*}

As $n$ grows to infinity and $\epsilon$ decreases to $0$, the probability that the seller rejects the trade approaches $1/2$. The approximation ratio approaches the probability that the trade occurs. Therefore, the upper bound of any randomized mechanism is $1/2$.
\end{proof}

As for symmetric case, \cite{Kang:2022wo} show that the approximation ratio $3/4$ is tight recently. We extend their bound to the randomized mechanisms.
\begin{theorem}
\label{thm:randomized}
Every randomized fixed-price mechanism that receives a single sample
from the distribution as information in the symmetric bilateral trade has approximation ratio at most $3/4$.
\end{theorem}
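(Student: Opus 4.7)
The plan is to adapt the adaptive-instance construction used for Theorem~\ref{thm:asy_sample_up} to the symmetric setting. In the asymmetric case the buyer's value could be pinned at a single huge constant, so the ratio essentially reduced to the trade probability and the bound was $1/2$; here both sides share the distribution $F$, so the accounting must also track $\mathrm{OPT\text{-}W}$, and the target constant is $3/4$ rather than $1/2$.

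Regard the randomized mechanism as a map $F_P:\mathbb{R}_+\to\Delta_{L^1}(\mathbb{R}_+)$ sending each possible sample to a price distribution. Fix an integer $n$, a small $\epsilon>0$, and a large growth factor $M=M(n)$. Inductively pick $v_1<v_2<\cdots<v_n$ so that, for each $k=1,\ldots,n-1$: (i) $v_{k+1}$ exceeds the $(1-\epsilon)$-quantile of $F_P(v_k)$ (which is finite, since $F_P(v_k)$ is a proper distribution); and (ii) $v_{k+1}\ge Mv_k$. Take $F$ to be uniform on $\{v_1,\ldots,v_n\}$ and draw $s_0,S,B$ i.i.d.~from $F$. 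For a realized price $P$, let $m(P)$ be the unique index with $P\in[v_{m(P)},v_{m(P)+1})$ (with $m=0$ for $P<v_1$ and $m=n$ for $P\ge v_n$).

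A direct computation gives $\mathrm{OPT\text{-}W}(F)=\sum_{i=1}^n v_i(2i-1)/n^2=(2-o(1))v_n/n$ as $M\to\infty$, since the last term dominates. For the welfare, split $W(F_P;F)=\mathbb{E}[S]+\mathbb{E}[(B-S)\mathbf{1}_{B\ge P\ge S}]$ with $\mathbb{E}[S]=(1+o(1))v_n/n$. Conditional on $m(P)=m\in\{1,\ldots,n-1\}$, the trade contribution is $(1+o(1))\,v_n\cdot m/n^2$, because the subtracted $S$-term and the non-top buyer values $v_{m+1},\ldots,v_{n-1}$ are swamped by $v_n$ in the geometric regime. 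Condition~(i) forces $m(P)\le k$ with conditional probability at least $1-\epsilon$ given $s_0=v_k$; averaging over $k$ uniform on $\{1,\ldots,n\}$, and handling the corner sample $s_0=v_n$ separately (which contributes only $O(v_n/n^2)$ to $W$), yields $\mathbb{E}[m(P)]\le(n+1)/2+n\epsilon$. Combining,
\[
W(F_P;F)\;\le\;\Bigl(\tfrac{3}{2}+O(\epsilon)+o(1)\Bigr)\frac{v_n}{n},
\]
so $W/\mathrm{OPT\text{-}W}\le\tfrac{3}{4}+O(\epsilon)+o(1)$. Sending $M\to\infty$, then $\epsilon\to 0$, then $n\to\infty$ completes the argument.

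The main obstacle is the bookkeeping in the welfare estimate: one must verify that all secondary contributions (the $\mathbb{E}[S\mathbf{1}_{\text{trade}}]$ piece, buyer values strictly below $v_n$, the tail event $\{P\ge v_{k+1}\}$ at each sample, and the corner case $s_0=v_n$) are genuinely $o(v_n/n)$ in the chosen regime. This is automatic in the asymmetric proof because the buyer contributes only one deterministic value, while here both sides fluctuate over the constructed ladder and we must control their joint behavior carefully.
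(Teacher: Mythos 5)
Your proposal is correct and is essentially the paper's argument: both build the support adaptively so that each sample's price distribution sits below the next support point with probability $1-\epsilon$, place a rare, dominant top value so that $\mathrm{OPT}\approx 2\cdot(\text{top value})\cdot(\text{prob.\ it appears})$ while the mechanism captures only about $\tfrac{3}{2}$ of that, and let the parameters degenerate. The only differences are bookkeeping: you fold the paper's separate huge atom $s_a$ into the top rung $v_n$ of a geometric ladder and upper-bound the welfare directly, whereas the paper keeps $s_a$ as a distinct atom of mass $1-q$ and instead lower-bounds the welfare loss by $\tfrac14\cdot\mathrm{OPT}$.
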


\begin{proof}
Similar to the proof of Theorem~\ref{thm:asy_sample_up}, for any randomized
mechanism, we want to show that there exists a distribution that achieves $3/4$
of welfare. Suppose that the randomized mechanism $F_P(\cdot)$ maps the single
sample value to a price (note that it is also a random variable). Given $F_P(\cdot)$
and any $\epsilon>0$,
we construct a sequence of positive numbers $\hat{s}=\{s_1, s_2,\ldots, s_n\}$ such that
$\Pr\left[F_P(s_i)<s_{k+1}\right]\ge 1-\epsilon$, where $i=1,2,\ldots,k$. Besides,
we introduce a value $s_a$ which is high enough (say $s_a\ge s_n^2$). The support of the value
distribution is exactly $\{s_1, s_2,\ldots, s_n, s_a\}$ in which the probability
that the value is any of $\{s_1,s_2,\ldots,s
_n\}$ is identical, i.e., $\Pr \left[ s=s_i \right]=\Pr \left[ s=s_j \right]$
where $i,j\in [n]$. We denote by $q$ the total probability that the value is from the sequence
$\hat{s}$, that is $\Pr \left[ s\in \hat{s} \right]=q$, and $\Pr \left[ v=s_a \right]=1-q$. For convenience, we
denote such a distribution as $F$.

For the optimal welfare, we consider two situations. The first one is that at
least one of the seller and buyer has the value $s_a$. The other is that
neither seller nor buyer has the value $s_a$. In this case, if the value is less
than $s_n$, we increase it to $s_n$, this can only improve the welfare. Thus, we have
$$\textrm{OPT-W}(F)\leq s_a\cdot(1-q^2)+s_n\cdot q^2.$$

Given the distribution $F$ and the randomized mechanism $F_P(\cdot)$, we denote
by $\mathrm{LOSS}(F,F_P)$ the difference between the optimal welfare and the
expected welfare by the randomized mechanism mentioned above. We only consider
the case that the buyer's value is $s_a$ and  the seller's value is greater than
the price $F_P(\cdot)$. We can lower bound $\mathrm{LOSS}(F,F_P)$, where $\bar{s}$ is the
value sample from the distribution $F$:
\begin{align*}
   &\sum_{i=1}^n (s_a-s_i)\cdot \Pr\left[F_P(\bar{s})< s_i\right]\cdot \Pr\left(s=s_i\right)\cdot \Pr\left(b=s_a\right)\\
    \geq&(s_a-s_n)\cdot(1-q)\cdot(1-\epsilon)\cdot\frac{q}{n}\cdot\sum_{i=2}^n \frac{(i-1)q}{n}\\
    =&(s_a-s_n)\cdot(1-q)\cdot(1-\epsilon)\cdot\frac{(n-1)q^2}{2n}.
\end{align*}
The second inequality is due to that we only consider that $\bar{s}=s_1,s_2,\ldots,s_{i-1}$.
Now we compute the ratio between the welfare loss and the optimal welfare. Note that $s_a$ is very high, we have
\begin{align*}
    \frac{\mathrm{LOSS}(F,F_P)}{\textrm{OPT-W}(F)}\geq&\frac{s_a\cdot(1-q)\cdot(1-\epsilon)\cdot\frac{(n-1)q^2}{2n}}{s_a\cdot(1-q^2)}\\
    =&\frac{n-1}{2n}\cdot(1-\epsilon)\cdot\frac{q^2}{1+q}.
\end{align*}

For the term $\frac{q^2}{1+q}$, we know that it always increases if $q\in[0,1]$. Consequently, we have $\mathrm{LOSS}(F,F_P)\ge\frac14\cdot\textrm{OPT-W}(F)$ when $\epsilon\to 0$, $n\to\infty$ and $q\to 1$. That is, the upper bound of any randomized mechanism is $3/4$.
\end{proof}

\section{Conclusion and Discussion}
We improved approximation ratios of fixed-price mechanisms in several settings for the bilateral trade problem with one buyer and one seller, in terms of the optimal welfare. Our methods differs from previous work. We establish the lower bounds by solving an optimization problem. Then we try to find a nearly-optimal solution using the discretization method and dynamic programming. Ideally, we can bound the error caused by discretization within any precision. However, due to the limitation of the hardware, we still leave a gap in approximation ratios.

Several interesting questions arise. Note that we only use the knowledge of the buyer distribution and get Theorem~\ref{thm:approx}. In fact, we can also start from the seller distribution instead and get a similar but not the same optimization problem given in Theorem~\ref{thm:approx}. If we only use the knowledge of the seller distribution, would we have a worse approximation bound? Another challenging question is to come up with a simple mechanism to approximate the optimal gain-from-trade. We do also think the ratio of RandOff~\cite{Deng:2022uy,Fei:2022uh} could be improved significantly, but it needs new and deep insights.

\bibliographystyle{alpha}
\bibliography{ref}
\end{document}